\documentclass[11pt,a4paper]{article}

\usepackage{amsmath,amssymb,amsthm,hyperref,color}
\usepackage{filecontents}
\usepackage{enumerate}
\hypersetup{colorlinks=true,linkcolor=blue, citecolor=blue}
\usepackage{enumerate}

\usepackage{a4wide}


\begin{filecontents}{\jobname.bbl}

\end{filecontents}

\begin{filecontents}{\jobname.bib}

\begin{thebibliography}{10}

@article{PB,
  author    = {J. Scott Provan and
               Michael O. Ball},
  title     = {The Complexity of Counting Cuts and of Computing the Probability that
               a Graph is Connected},
  journal   = {{SIAM} J. Comput.},
  volume    = {12},
  number    = {4},
  pages     = {777--788},
  year      = {1983},
  url       = {http://dx.doi.org/10.1137/0212053},
  doi       = {10.1137/0212053},
  timestamp = {Tue, 01 Feb 2011 14:01:24 +0100},
  biburl    = {http://dblp.uni-trier.de/rec/bib/journals/siamcomp/ProvanB83},
  bibsource = {dblp computer science bibliography, http://dblp.org}
}

@article {TodaWat,
    AUTHOR = {Toda, Seinosuke and Watanabe, Osamu},
     TITLE = {Polynomial-time {$1$}-{T}uring reductions from {$\#{\rm PH}$}
              to {$\#{\rm P}$}},
   JOURNAL = {Theoret. Comput. Sci.},
  FJOURNAL = {Theoretical Computer Science},
    VOLUME = {100},
      YEAR = {1992},
    NUMBER = {1},
     PAGES = {205--221},
      ISSN = {0304-3975},
     CODEN = {TCSDI},
   MRCLASS = {68Q15 (03D15)},
  MRNUMBER = {1171440 (93m:68057)},
MRREVIEWER = {Robert M. Baer},
       DOI = {10.1016/0304-3975(92)90369-Q},
       URL = {http://dx.doi.org/10.1016/0304-3975(92)90369-Q},
}

@Phdthesis{Toda,
author={S. Toda},
title={Computational complexity of counting classes},
school={Tokyo Institute of Technology},
year={1991}}

@article{Val2,
  author    = {Leslie G. Valiant},
  title     = {The Complexity of Enumeration and Reliability Problems},
  journal   = {{SIAM} J. Comput.},
  volume    = {8},
  number    = {3},
  pages     = {410--421},
  year      = {1979},
  url       = {http://dx.doi.org/10.1137/0208032},
  doi       = {10.1137/0208032},
  timestamp = {Tue, 01 Feb 2011 14:01:13 +0100},
  biburl    = {http://dblp.uni-trier.de/rec/bib/journals/siamcomp/Valiant79},
  bibsource = {dblp computer science bibliography, http://dblp.org}
}
@article{Val1,
  author    = {Leslie G. Valiant},
  title     = {The Complexity of Computing the Permanent},
  journal   = {Theor. Comput. Sci.},
  volume    = {8},
  pages     = {189--201},
  year      = {1979},
  url       = {http://dx.doi.org/10.1016/0304-3975(79)90044-6},
  doi       = {10.1016/0304-3975(79)90044-6},
  timestamp = {Wed, 07 Sep 2011 12:13:19 +0200},
  biburl    = {http://dblp.uni-trier.de/rec/bib/journals/tcs/Valiant79},
  bibsource = {dblp computer science bibliography, http://dblp.org}
}

@article{HV,
  author    = {Lane A. Hemaspaandra and
               Heribert Vollmer},
  title     = {The satanic notations: counting classes beyond \#{P} and other definitional
               adventures},
  journal   = {{SIGACT} News},
  volume    = {26},
  number    = {1},
  pages     = {2--13},
  year      = {1995},
  url       = {http://doi.acm.org/10.1145/203610.203611},
  doi       = {10.1145/203610.203611},
  timestamp = {Mon, 25 Jan 2010 14:45:21 +0100},
  biburl    = {http://dblp.uni-trier.de/rec/bib/journals/sigact/HemaspaandraV95},
  bibsource = {dblp computer science bibliography, http://dblp.org}
}

@article{HP,
  author    = {Miki Hermann and
               Reinhard Pichler},
  title     = {Complexity of counting the optimal solutions},
  journal   = {Theor. Comput. Sci.},
  volume    = {410},
  number    = {38-40},
  pages     = {3814--3825},
  year      = {2009},
  url       = {http://dx.doi.org/10.1016/j.tcs.2009.05.025},
  doi       = {10.1016/j.tcs.2009.05.025},
  timestamp = {Thu, 03 Sep 2009 14:25:00 +0200},
  biburl    = {http://dblp.uni-trier.de/rec/bib/journals/tcs/HermannP09},
  bibsource = {dblp computer science bibliography, http://dblp.org}
}

@article{DH,
  author    = {Arnaud Durand and
               Miki Hermann},
  title     = {On the counting complexity of propositional circumscription},
  journal   = {Inf. Process. Lett.},
  volume    = {106},
  number    = {4},
  pages     = {164--170},
  year      = {2008},
  url       = {http://dx.doi.org/10.1016/j.ipl.2007.11.006},
  doi       = {10.1016/j.ipl.2007.11.006},
  timestamp = {Thu, 12 Jun 2008 13:33:59 +0200},
  biburl    = {http://dblp.uni-trier.de/rec/bib/journals/ipl/DurandH08},
  bibsource = {dblp computer science bibliography, http://dblp.org}
}

@article{DHK,
  author    = {Arnaud Durand and
               Miki Hermann and
               Phokion G. Kolaitis},
  title     = {Subtractive reductions and complete problems for counting complexity
               classes},
  journal   = {Theor. Comput. Sci.},
  volume    = {340},
  number    = {3},
  pages     = {496--513},
  year      = {2005},
  url       = {http://dx.doi.org/10.1016/j.tcs.2005.03.012},
  doi       = {10.1016/j.tcs.2005.03.012},
  timestamp = {Tue, 10 Jan 2006 15:36:50 +0100},
  biburl    = {http://dblp.uni-trier.de/rec/bib/journals/tcs/DurandHK05},
  bibsource = {dblp computer science bibliography, http://dblp.org}
}

@article{JYPPLS,
  author    = {David S. Johnson and
               Christos H. Papadimitriou and
               Mihalis Yannakakis},
  title     = {How Easy is Local Search?},
  journal   = {J. Comput. Syst. Sci.},
  year      = {1988},
  volume    = {37},
  number    = {1},
  pages     = {79--100},
  url       = {http://dx.doi.org/10.1016/0022-0000(88)90046-3},
  doi       = {10.1016/0022-0000(88)90046-3},
  timestamp = {Tue, 11 Nov 2014 16:38:23 +0100},
  biburl    = {http://dblp.uni-trier.de/rec/bib/journals/jcss/JohnsonPY88},
  bibsource = {dblp computer science bibliography, http://dblp.org}
}

@incollection {crescenzi-badap,
    AUTHOR = {Crescenzi, Pierluigi},
     TITLE = {A short guide to approximation preserving reductions},
 BOOKTITLE = {Twelfth {A}nnual {IEEE} {C}onference on {C}omputational
              {C}omplexity ({U}lm, 1997)},
     PAGES = {262--273},
 PUBLISHER = {IEEE Computer Soc., Los Alamitos, CA},
      YEAR = {1997},
   MRCLASS = {68Q15 (68W25)},
  MRNUMBER = {1758143},
       DOI = {10.1109/CCC.1997.612321},
       URL = {http://dx.doi.org/10.1109/CCC.1997.612321},
}

@article{CH,
  author    = {Nadia Creignou and
               Miki Hermann},
  title     = {Complexity of Generalized Satisfiability Counting Problems},
  journal   = {Inf. Comput.},
  volume    = {125},
  number    = {1},
  pages     = {1--12},
  year      = {1996},
  url       = {http://dx.doi.org/10.1006/inco.1996.0016},
  doi       = {10.1006/inco.1996.0016},
  timestamp = {Wed, 06 Jul 2011 21:24:05 +0200},
  biburl    = {http://dblp.uni-trier.de/rec/bib/journals/iandc/CreignouH96},
  bibsource = {dblp computer science bibliography, http://dblp.org}
}

@book{CKS,
  title={Complexity Classifications of Boolean Constraint Satisfaction Problems},
  author={Creignou, Nadia and Khanna, Sanjeev and Sudan, Madhu},
  isbn={9780898714791},
  lccn={00050988},
  series={Monographs on Discrete Mathematics and Applications},
  url={https://books.google.co.uk/books?id=FE4AwKg8aZsC},
  year={2001},
  publisher={Society for Industrial and Applied Mathematics}
}

@article {plainbases,
    AUTHOR = {Creignou, Nadia and Kolaitis, Phokion and Zanuttini, Bruno},
     TITLE = {Structure identification of {B}oolean relations and plain
              bases for co-clones},
   JOURNAL = {J. Comput. System Sci.},
  FJOURNAL = {Journal of Computer and System Sciences},
    VOLUME = {74},
      YEAR = {2008},
    NUMBER = {7},
     PAGES = {1103--1115},
      ISSN = {0022-0000},
     CODEN = {JCSSBM},
   MRCLASS = {68Q25},
  MRNUMBER = {2454056 (2009m:68111)},
MRREVIEWER = {Stefan Porschen},
       DOI = {10.1016/j.jcss.2008.02.005},
       URL = {http://dx.doi.org/10.1016/j.jcss.2008.02.005},
}

@article{DGGJ,
  author    = {Martin E. Dyer and
               Leslie Ann Goldberg and
               Catherine S. Greenhill and
               Mark Jerrum},
  title     = {The Relative Complexity of Approximate Counting Problems},
  journal   = {Algorithmica},
  volume    = {38},
  number    = {3},
  pages     = {471--500},
  year      = {2003},
  url       = {http://dx.doi.org/10.1007/s00453-003-1073-y},
  doi       = {10.1007/s00453-003-1073-y},
  timestamp = {Fri, 16 Jul 2004 14:11:55 +0200},
  biburl    = {http://dblp.uni-trier.de/rec/bib/journals/algorithmica/DyerGGJ03},
  bibsource = {dblp computer science bibliography, http://dblp.org}
}

@article{JS,
  author    = {Mark Jerrum and
               Alistair Sinclair},
  title     = {Polynomial-Time Approximation Algorithms for the {I}sing Model},
  journal   = {{SIAM} J. Comput.},
  volume    = {22},
  number    = {5},
  pages     = {1087--1116},
  year      = {1993},
  url       = {http://dx.doi.org/10.1137/0222066},
  doi       = {10.1137/0222066},
  timestamp = {Mon, 12 Sep 2011 16:10:10 +0200},
  biburl    = {http://dblp.uni-trier.de/rec/bib/journals/siamcomp/JerrumS93},
  bibsource = {dblp computer science bibliography, http://dblp.org}
}

@article{FerroIsing,
  author    = {Leslie Ann Goldberg and
               Mark Jerrum},
  title     = {The Complexity of Ferromagnetic {I}sing with Local Fields},
  journal   = {Combinatorics, Probability {\&} Computing},
  volume    = {16},
  number    = {1},
  pages     = {43--61},
  year      = {2007},
  url       = {http://dx.doi.org/10.1017/S096354830600767X},
  doi       = {10.1017/S096354830600767X},
  timestamp = {Thu, 20 Nov 2008 13:16:52 +0100},
  biburl    = {http://dblp.uni-trier.de/rec/bib/journals/cpc/GoldbergJ07},
  bibsource = {dblp computer science bibliography, http://dblp.org}
}

@inproceedings{MaximalBIS,
  author    = {Leslie Ann Goldberg and
               Rob Gysel and
               John Lapinskas},
  title     = {Approximately Counting Locally-Optimal Structures},
  booktitle = {Automata, Languages, and Programming - 42nd International Colloquium,
               {ICALP} 2015, Kyoto, Japan, July 6-10, 2015, Proceedings, Part {I}},
  pages     = {654--665},
  year      = {2015},
  crossref  = {DBLP:conf/icalp/2015-1},
  url       = {http://dx.doi.org/10.1007/978-3-662-47672-7_53},
  doi       = {10.1007/978-3-662-47672-7_53},
  timestamp = {Mon, 22 Jun 2015 12:00:28 +0200},
  biburl    = {http://dblp.uni-trier.de/rec/bib/conf/icalp/GoldbergGL15},
  bibsource = {dblp computer science bibliography, http://dblp.org}
}

@proceedings{DBLP:conf/icalp/2015-1,
  editor    = {Magn{\'{u}}s M. Halld{\'{o}}rsson and
               Kazuo Iwama and
               Naoki Kobayashi and
               Bettina Speckmann},
  title     = {Automata, Languages, and Programming - 42nd International Colloquium,
               {ICALP} 2015, Kyoto, Japan, July 6-10, 2015, Proceedings, Part {I}},
  series    = {Lecture Notes in Computer Science},
  volume    = {9134},
  publisher = {Springer},
  year      = {2015},
  url       = {http://dx.doi.org/10.1007/978-3-662-47672-7},
  doi       = {10.1007/978-3-662-47672-7},
  isbn      = {978-3-662-47671-0},
  timestamp = {Mon, 22 Jun 2015 11:28:54 +0200},
  biburl    = {http://dblp.uni-trier.de/rec/bib/conf/icalp/2015-1},
  bibsource = {dblp computer science bibliography, http://dblp.org}
}

@article{trichotomy,
  author    = {Martin E. Dyer and
               Leslie Ann Goldberg and
               Mark Jerrum},
  title     = {An approximation trichotomy for {B}oolean {\#}{CSP}},
  journal   = {J. Comput. Syst. Sci.},
  volume    = {76},
  number    = {3-4},
  pages     = {267--277},
  year      = {2010},
  url       = {http://dx.doi.org/10.1016/j.jcss.2009.08.003},
  doi       = {10.1016/j.jcss.2009.08.003},
  timestamp = {Fri, 07 May 2010 15:22:32 +0200},
  biburl    = {http://dblp.uni-trier.de/rec/bib/journals/jcss/DyerGJ10},
  bibsource = {dblp computer science bibliography, http://dblp.org}
}

@book{MU05,
    author    = {Michael Mitzenmacher and Eli Upfal},
    title     = {Probability and Computing},
    publisher = {Cambridge University Press},
    year      = 2005
}

@ARTICLE{JVV86,
  author = {Jerrum, Mark and Valiant, Leslie G. and Vazirani, Vijay},
  title = {Random generation of combinatorial structures from a uniform distribution},
  journal = {Theoret. Comput. Sci.},
  year = {1986},
  volume = {43},
  pages = {169--188},
  number = {2--3},
  coden = {TCSDI},
  fjournal = {Theoretical Computer Science},
  issn = {0304-3975},
  mrclass = {68Q15},
  mrnumber = {MR855970 (88b:68063)},
  mrreviewer = {Claus-Peter Schnorr}
}

@inproceedings{Schaefer,
 author = {Schaefer, Thomas J.},
 title = {The Complexity of Satisfiability Problems},
 booktitle = {Proceedings of the Tenth Annual ACM Symposium on Theory of Computing},
 series = {STOC '78},
 year = {1978},
 location = {San Diego, California, USA},
 pages = {216--226},
 numpages = {11},
 url = {http://doi.acm.org/10.1145/800133.804350},
 doi = {10.1145/800133.804350},
 acmid = {804350},
 publisher = {ACM},
 address = {New York, NY, USA},
}

\end{filecontents}

\newtheorem{theorem}{Theorem}
\newtheorem{lemma}[theorem]{Lemma}

\newtheorem{definition}[theorem]{Definition}

\title {The complexity of counting locally maximal 
    satisfying assignments of Boolean CSPs\thanks{  The research leading to these results has received funding from
    the European Research Council under the European Union's Seventh
    Framework Programme (FP7/2007--2013) ERC grant agreement
    no.\ 334828. The paper reflects only the authors' views and not
    the views of the ERC or the European Commission. The European
    Union is not liable for any use that may be made of the
    information contained therein.}}

\author{Leslie Ann Goldberg\thanks{
Department of Computer Science, University of Oxford,   UK.  
} \and Mark Jerrum\thanks{
School of Mathematical Sciences, Queen Mary, University of
London,   UK. } 
 }
  
\newcommand\FP{\ensuremath{\mathsf{FP}}}

\newcommand{\numP}{\ensuremath{\mathsf{\#P}}}

\newcommand{\numCSP}[1]{\#\mathsf{CSP}(#1)}
\newcommand{\CSP}[1]{\mathsf{CSP}(#1)}
\newcommand{\numMaxCSP}[1]{\#\mathsf{LocalMaxCSP}(#1)}
\def\SAT{\#\mathsf{SAT}}
\def\BIS{\#\mathsf{BIS}}

\newcommand{\IMtwo} {\mathsf{IM}_2}
\newcommand{\APeq}{\equiv_{\mathsf{AP}}}
\newcommand{\APred}{\leq_{\mathsf{AP}}}

\newcommand{\tospin}[3]{{#1}_{[{#2} \rightarrow {#3}]}}
\newcommand{\zeroes}[1]{\mathcal{Z}(#1)}
\newcommand{\nonzeroes}[1]{\mathcal{N}(#1)}
\newcommand{\maxzero}[2]{\mathcal{M}_0(#1,#2)}
\newcommand{\maxone}[2]{\mathcal{M}_1(#1,#2)}
\newcommand{\bad}[2]{\mathcal{B}(#1,#2)}
\newcommand{\OR}{\mathsf{OR}}
\newcommand{\NAND}{\mathsf{NAND}}    
  
\newcommand{\Implies}{\mathsf{Implies}}
\newcommand{\Uzero}{\mathsf{U}_0}
\newcommand{\Uone}{\mathsf{U}_1}
\newcommand{\alphabet}{\Sigma}
\newcommand{\poly}{\mathop{\mathrm{poly}}}
\newcommand{\circum}{\#\mathsf{Circumscription}}
\newcommand{\numdotcoNP}{\sf \#\cdot \mathsf{coNP}}
\newcommand{\numNP}{\sf \#\mathsf{NP}}

\begin{document}

\maketitle

\begin{abstract}
We investigate the computational complexity of the problem of counting the locally maximal
satisfying assignments of a Constraint Satisfaction Problem (CSP) over the Boolean domain $\{0,1\}$. 
A~satisfying assignment is \emph{locally maximal} if 
any new assignment which is obtained from it by changing a~$0$ to a~$1$
is  unsatisfying. 
For each constraint language~$\Gamma$,
$\numMaxCSP{\Gamma}$
denotes the problem of counting the locally maximal satisfying assignments, 
given an input CSP with constraints in~$\Gamma$.
We give a complexity dichotomy for the problem of \emph{exactly} counting the
locally maximal satisfying assignments and a complexity 
trichotomy for the problem of \emph{approximately} counting them.
Relative to the 
problem $\numCSP{\Gamma}$, which is the problem of counting \emph{all} 
satisfying assignments,  
the 
locally
maximal version can sometimes be easier but never harder.
This finding contrasts with the recent discovery that
approximately counting 
locally
maximal independent sets in a bipartite graph
is harder (under the usual complexity-theoretic assumptions)  than counting 
all independent sets. 
\end{abstract}
 
 \noindent
\textit{Keywords.}
Constraint satisfaction problem; computational complexity of counting problems; approximate computation.

\section{Introduction}
 
A Boolean Constraint Satisfaction Problem (CSP) 
is a generalised satisfiability problem.  An instance of a Boolean 
CSP is a set of variables together with a collection of constraints that enforce certain
relationships between the variables.  These constraints are chosen from an agreed
finite set (``language'') $\Gamma$ of relations of various arities on 
the Boolean domain $\{0,1\}$.
The study of the computational complexity of Boolean CSPs has a long history, 
starting with Schaefer, who described the complexity of the basic decision problem:
is a given Boolean CSP instance satisfiable?  The computational complexity of the
satisfiability problem depends, of course, on the constraint language~$\Gamma$, becoming 
potentially harder as $\Gamma$ becomes larger and more expressive.
Schaefer showed~\cite{Schaefer} that, depending on $\Gamma$, the satisfiability 
problem is either polynomial-time solvable or NP-complete, and he provided a precise
characterisation of this dichotomy.  

The problem $\numCSP{\Gamma}$ is the 
problem of determining the number of satisfying assignments of
a CSP instance with constraint language~$\Gamma$.
A dichotomy for this counting problem was given by Creignou and Hermann~\cite{CH}.
\begin{theorem}(Creignou, Hermann~\cite{CH}) \label{thm:CH}
Let~$\Gamma$ be a constraint language with domain~$\{0,1\}$. The problem $\numCSP{\Gamma}$ is
in~$\FP$ if every relation in~$\Gamma$ is affine. Otherwise, $\numCSP{\Gamma}$ is $\numP$-complete.
\end{theorem}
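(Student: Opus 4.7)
The plan is to prove the two directions of the dichotomy separately. For the tractable direction (every $R \in \Gamma$ affine), observe that by definition each constraint in a $\numCSP{\Gamma}$ instance is equivalent to a conjunction of linear equations over $\mathbb{F}_2$. Concatenating the constraints yields a single linear system, whose solution set has size $2^{n-r}$ where $n$ is the number of variables and $r$ is the rank of the coefficient matrix. Gaussian elimination computes $r$ in polynomial time, placing $\numCSP{\Gamma}$ in $\FP$.

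For the hardness direction, assume some $R \in \Gamma$ is non-affine; I plan to reduce a known $\numP$-hard counting problem --- for instance $\#$Monotone-$2$SAT (Valiant~\cite{Val2}) or $\SAT$ --- to $\numCSP{\Gamma}$. I would exploit the Galois connection between Boolean co-clones (sets of relations closed under primitive positive definability) and Post's lattice of clones of Boolean operations: a Boolean relation is affine iff it is preserved by the ternary minority operation $(x,y,z) \mapsto x \oplus y \oplus z$. Hence the co-clone $\langle \Gamma \rangle$ strictly contains the affine co-clone, and by Post's classification $\langle \Gamma \rangle$ contains at least one of a short list of canonical ``hard'' building blocks. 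The two-step plan is then: (i)~exhibit a primitive positive definition of some canonical non-affine relation (such as $\OR$, possibly combined with the two constants) from the relations in $\Gamma$; (ii)~translate an instance of the chosen hard counting problem into a $\numCSP{\Gamma}$ instance by replacing each canonical constraint by its pp-implementation.

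The delicate step is turning pp-definitions into count-preserving reductions: an existential quantifier over hidden variables multiplies the satisfying-assignment count by the number of extensions, which in general depends on the assignment to the visible variables and so destroys the count. The standard remedy is to produce \emph{faithful} implementations, namely gadgets in which every satisfying assignment to the visible variables is extended by exactly the same fixed number~$c$ of hidden assignments, so that the hidden variables contribute only a uniform factor $c^m$ across the $m$ gadgets used, which can then be divided out at the end. Producing such faithful implementations uniformly across the non-affine co-clones identified in step~(i) is the main technical obstacle. A secondary issue is pinning the constants $0$ and $1$ when these relations are not directly in $\Gamma$; this is handled by standard replication or variable-identification tricks, or else circumvented by choosing a reduction target that does not require constants in the cases where they are not available.
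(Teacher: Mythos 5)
The paper does not prove this statement; it is quoted as Theorem~1 of Creignou and Hermann~\cite{CH} and used as a black box, so there is no paper proof to compare against. That said, your proposal has a concrete gap in the hardness direction that is worth flagging.

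Your tractable direction is fine: an affine instance is literally a linear system over $\mathbb{F}_2$ and Gaussian elimination gives the count.

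The hardness direction as you have laid it out does not go through. Step~(i) proposes to pp-define a canonical non-affine relation ``such as $\OR$, possibly combined with the two constants'' from any non-affine~$\Gamma$, and then in step~(ii) reduce a hard counting problem through that implementation. But it is false that every non-affine Boolean co-clone (even with constants available) pp-defines $\OR$. The co-clone $\IMtwo$ --- generated by the implication relation $\Implies$ together with $\Uzero$ and $\Uone$, which this very paper works with --- is a counterexample: it is not affine, yet every relation in $\IMtwo$ is a conjunction of implications and unary pins (the paper recalls that $\{\Implies,\Uzero,\Uone\}$ is a plain basis for it), and $\OR$ is not of this form, so $\OR$ is not pp-definable from $\IMtwo$. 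Nevertheless $\numCSP{\IMtwo}$ is $\numP$-complete --- this is essentially the Provan--Ball theorem that counting independent sets in a bipartite graph is $\numP$-complete~\cite{PB}. So your proposed reduction chain, which routes all hardness through an $\OR$ gadget, leaves an entire family of non-affine languages uncovered, and a correct proof must handle (at least) the implication-like co-clones by a different reduction (e.g.\ from $\BIS$ or from counting downsets). The actual argument in~\cite{CH}, and its later reformulation in the Creignou--Khanna--Sudan framework~\cite{CKS}, does exactly this kind of finer case analysis over the co-clone lattice rather than collapsing everything onto $\OR$. Your remarks about faithful implementations and hidden-variable multiplicities are the right concerns once the case analysis is fixed, but as stated the plan proves $\numP$-hardness only for those $\Gamma$ that happen to pp-define $\OR$, which is strictly weaker than the theorem.
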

A relation is \emph{affine} if it is expressible as the set of solutions to a system of
linear equations over the two-element field $\mathbb{F}_2$.
The constraint language~$\Gamma$ is said to be affine if and only if every constraint in~$\Gamma$
is affine. Thus,
Theorem~\ref{thm:CH} shows (assuming $\FP\neq \numP$) that $\numCSP{\Gamma}$ is tractable
if and only if the set of satisfying assignments can be expressed as the set of solutions
to a system of linear equations.

Since most constraint languages~$\Gamma$ lead to intractable counting problems,
it is natural to consider the complexity of \emph{approximately} counting
the satisfying assignments of a CSP.
Dyer, Goldberg, Greenhill and Jerrum~\cite{DGGJ} used approximation-preserving reductions
(AP-reductions)
between counting problems to explore the complexity of approximately computing solutions.
They identified three equivalence classes of interreducible counting 
problems within $\numP$:
(i)~problems that have a polynomial-time approximation algorithm or ``FPRAS'',
(ii)~problems that are equivalent to $\BIS$ under AP-reductions,
and (iii)~problems that are equivalent to $\SAT$ under AP-reductions.
Here, $\BIS$ is the problem of counting  independent sets in a bipartite
graph and $\SAT$ is the problem of counting the satisfying assignments of a Boolean 
formula in CNF{}.
Dyer, Goldberg and Jerrum~\cite{trichotomy} show that all Boolean counting CSPs 
can be classified using these classes.   
\begin{theorem} (\cite[Theorem 3]{trichotomy})
\label{thm:trichotomy}
Let $\Gamma$ be a constraint language with domain $\{0,1\}$. If every relation in $\Gamma$ is affine then
$\numCSP{\Gamma}$ is in $\FP$. Otherwise if every relation in~$\Gamma$ is in~$\IMtwo$  
then $\numCSP{\Gamma} \APeq \BIS$.
Otherwise $\numCSP{\Gamma} \APeq \SAT$.
\end{theorem}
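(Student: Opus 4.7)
The plan is to establish the trichotomy by invoking Post's lattice of Boolean co-clones to organise the possible behaviours of $\Gamma$. The tractability statement for affine $\Gamma$ is immediate from Theorem~\ref{thm:CH} of Creignou and Hermann, so the remaining work is to prove AP-equivalence in the two non-affine cases by establishing matching upper and lower bounds under $\APred$.

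For the upper bounds: when $\Gamma \subseteq \IMtwo$, every $R \in \Gamma$ admits a \emph{plain} pp-definition in the sense of~\cite{plainbases}, that is, a conjunction of implications and unary constants without existential quantification. Rewriting each input constraint in this form gives a parsimonious reduction from $\numCSP{\Gamma}$ to $\numCSP{\{\Implies,\{0\},\{1\}\}}$, and the latter is AP-equivalent to $\BIS$ via a standard correspondence between satisfying assignments of implication constraints and up-sets in the induced digraph (the bipartite instance being literally $\BIS$). For the general case, $\numCSP{\Gamma} \APred \SAT$ is obtained by expressing each $R \in \Gamma$ as a Boolean CNF formula; this is always possible without auxiliary variables since $\Gamma$ is finite and the domain is $\{0,1\}$, and it yields another parsimonious reduction.

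For the lower bounds I would rely on Post's classification. If $\Gamma$ contains a non-affine relation while $\Gamma \subseteq \IMtwo$, then the co-clone generated by $\Gamma$ is all of $\IMtwo$, so $\Gamma$ pp-defines the binary implication relation (after simulating the unary constants by standard tricks such as pinning via a distinguished variable); converting this pp-definition into a weight-preserving gadget construction yields $\BIS \APred \numCSP{\Gamma}$. If some $R \in \Gamma$ lies outside $\IMtwo$, then the co-clone generated by $\Gamma$ sits strictly above $\IMtwo$, and a case analysis over the finitely many co-clones in that region of Post's lattice shows that $\Gamma$ pp-defines a relation rich enough to encode arbitrary CNF clauses, giving $\SAT \APred \numCSP{\Gamma}$.

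The main obstacle is converting the pp-definitions produced by the co-clone analysis into genuine AP-reductions. The existentially quantified auxiliary variables introduced by a pp-definition can multiply solution counts in uncontrolled ways, destroying the approximation ratio that an AP-reduction must preserve. The remedy is to design \emph{weight-preserving gadgets} in which each assignment to the original variables extends to exactly one (or to a uniformly computable number of) assignments of the auxiliaries. Verifying that such gadgets exist uniformly across all relevant co-clones, and handling the boundary cases in which $\Gamma$ mixes affine and non-affine relations within $\IMtwo$, constitutes the combinatorial heart of the proof.
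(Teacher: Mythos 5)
First, a point of orientation: the paper does not prove this statement at all --- it is quoted verbatim as Theorem~3 of Dyer, Goldberg and Jerrum~\cite{trichotomy} and used as a black box. So there is no internal proof to compare against; the relevant comparison is with the proof in~\cite{trichotomy} itself. Your outline does track the shape of that proof: the affine case is Creignou--Hermann; the upper bound for $\Gamma\subseteq\IMtwo$ goes through the plain basis $\{\Implies,\Uzero,\Uone\}$ and then to $\BIS$; the lower bounds come from gadgets built out of a non-affine relation (in~\cite{trichotomy} these are the ``strict, perfect, faithful implementations'' of Creignou, Khanna and Sudan, which realise one of $\Implies$, $\NAND$ or $\OR$, together with a separate argument for simulating the constants $\Uzero,\Uone$).

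The difficulty is that what you have written is a plan rather than a proof, and the parts you defer are exactly the content of the theorem. You correctly identify that pp-definitions with existential quantifiers do not give AP-reductions and that one needs implementations in which the number of extensions to auxiliary variables is controlled; but you then assert, without construction or citation of the specific lemmas, that such gadgets ``exist uniformly across all relevant co-clones.'' That existence claim is the theorem's combinatorial heart, and nothing in your write-up establishes it. Two further specific gaps: (i) the assertion that any non-affine relation inside $\IMtwo$ generates the whole co-clone $\IMtwo$ (and hence pp-defines $\Implies$) is plausible but unproved, and the subsequent removal of the existential quantifiers from that pp-definition is again the hard step, not a formality; and (ii) the step $\numCSP{\{\Implies,\Uzero,\Uone\}}\APred\BIS$ is not ``literally $\BIS$'' --- the implication digraph of an instance is an arbitrary digraph, so one is counting downsets of a preorder, and the AP-equivalence of that problem with $\BIS$ is itself a nontrivial result of~\cite{DGGJ} that must be invoked. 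As it stands the proposal reduces the theorem to a list of claims each of which is roughly as hard as the theorem.
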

In the statement of Theorem~\ref{thm:trichotomy}, $\APeq$ is the equivalence relation 
``interreducible via approx\-imation-preserving reductions''.  
In order to define $\IMtwo$, we must first define the binary implication relation
$\Implies=\{(0,0),(0,1),(1,1)\}$.
Then $\IMtwo$ is the set of  relations that can be 
expressed using conjunctions of these implications, together with  unary constraints. 
The precise definition of $\IMtwo$  
is given in Section~\ref{sec:notation},
along with precise definitions of the other concepts that appear in this introduction.

There are many other questions that one can ask about Boolean CSPs aside from 
deciding satisfiability and counting the satisfying assignments.
Here, we study the complexity of 
counting and 
approximately counting the number of locally maximal satisfying
assignments of
a CSP instance.  
A satisfying assignment is \emph{locally maximal} if any new assignment which is obtained
from it by changing a
single~$0$ to a~$1$ is unsatisfying. 
So 
local maximality
is with respect to the set of $1$'s in the satisfying assignment.
Also, 
it is with respect to local changes --- changing a single~$0$ to a~$1$.
Other notions of maximality are discussed in Section~\ref{sec:other}. 
 
Goldberg, Gysel and Lapinskas~\cite{MaximalBIS}
show (assuming that $\BIS$ 
is not equivalent to $\SAT$ under AP reductions) that 
counting 
locally maximal structures can be harder than counting
all structures.
In particular, Theorem~1 of~\cite{MaximalBIS} shows that counting 
the 
locally maximal independent sets in
a bipartite graph is equivalent to $\SAT$ under AP-reductions.
Obviously, counting \emph{all} independent
sets in a bipartite graph is exactly the problem $\BIS$, which is presumed
to be easier.
Thus, Goldberg, Gysel and Lapinskas have 
found an example of a (restricted) Boolean counting CSP (namely, $\BIS$)
where approximately counting 
locally maximal satisfying assignments is
apparently harder than approximately counting all satisfying assignments.
However, $\BIS$ isn't exactly a Boolean counting CSP --- rather, it is a
Boolean counting CSP with a restriction (bipartiteness) on the problem instance.
This prompted us  to investigate the complexity of 
approximating the number of satisfying assignments of unrestricted Boolean CSPs.
In particular, we study  $\numMaxCSP{\Gamma}$, the problem of counting locally maximal satisfying 
assignments of an instance of a Boolean CSP with constraint language~$\Gamma$.

Given the phenomenon displayed by $\BIS$, one might expect to find constraint 
languages~$\Gamma$ that exhibit a jump upwards in computational 
complexity when passing from $\numCSP{\Gamma}$ to $\numMaxCSP{\Gamma}$,
but we determine that this does not in fact occur.  The reverse may occur:  counting locally maximal 
vertex covers in a graph is trivial (there is just one), but counting all vertex 
covers is equivalent under AP-reducibility to~$\SAT$.  
It turns out that this trivial phenomenon,
which occurs when the property in question is monotone increasing,
is essentially
the only difference between $\numCSP{\Gamma}$ and $\numMaxCSP{\Gamma}$.

Our first result (Theorem~\ref{thm:exact}) presents a dichotomy
for the complexity of exactly solving $\numMaxCSP{\Gamma}$
for all Boolean constraint languages~$\Gamma$.
In most cases, $\numMaxCSP{\Gamma}$ is equivalent in complexity
to $\numCSP{\Gamma}$. However, if $\Gamma$ is \emph{essentially monotone} 
(the proper generalisation of the vertex cover property) 
then $\numMaxCSP{\Gamma}$ is in~$\FP$.
Our second result (Theorem~\ref{thm:main}) presents a trichotomy
for the complexity of \emph{approximately} solving  
  $\numMaxCSP{\Gamma}$. Once again, in most cases, 
$\numMaxCSP{\Gamma}$ is equivalent with respect to AP-reductions to
$\numCSP{\Gamma}$. 
The only exceptional case is the one that we have already seen ---
 if $\Gamma$ is \emph{essentially monotone} 
then $\numMaxCSP{\Gamma}$ is in~$\FP$.

The result leaves us with a paradox.  
There is a very direct reduction from
$\BIS$ to $\numCSP{\{\Implies\}}$ 
that is ``parsimonious'', i.e., 
preserves the number of solutions. 
So $\BIS$ is AP-reducible to  $\numCSP{\{\Implies\}}$
(which 
can also be seen from Theorem~\ref{thm:trichotomy}).
How can it be, then, that the complexity of approximately counting 
locally maximal 
independent sets in bipartite graphs
jumps upwards from the complexity of approximately counting all 
independent sets, 
whereas Theorem~\ref{thm:main} tells us that
$\numMaxCSP{\{\Implies\}}$ remains AP-equivalent to $\numCSP{\{\Implies\}}$?

The resolution of the paradox is as follows.  Suppose $G=(U,V,E)$
is an instance of $\BIS$, i.e., a graph with bipartition $U\cup V$
and edge set $E\subseteq U\times V$.
The parsimonious reduction
from $\BIS$ to $\numCSP{\{\Implies\}}$ simply interprets the vertices 
$U\cup V$ of instance $G$ as Boolean variables, and each edge 
$(u,v)\in E$ as a constraint $\Implies(u,v)$.  
Then there is an obvious bijection between independent sets in $G$ and 
satisfying assignments of the constructed instance of $\numCSP{\{\Implies\}}$,
but it involves interpreting 0 and 1 in different ways on the opposite 
sides of the bipartition:  on~$U$, $1$ means ``in the independent set'' 
while on~$V$, $1$ means ``out of the independent set''.  Of course, 
local maximality is not preserved by this change in interpretation, 
as it presumes a particular ordering on 0 and~1.  
Local maximality is sensitive to the precise encoding of solutions,
and parsimonious reductions are no longer enough to capture 
complexity equivalences. 
 
 Thus, we are left with the situation that
 approximately counting 
 locally
 maximal  independent sets in bipartite graphs is apparently more
 difficult than approximately counting all independent sets, but
 within the realm of Boolean constraint satisfaction, this phenomenon does not occur.
 Relative to $\numCSP{\Gamma}$,
 the problem $\numMaxCSP{\Gamma}$ can sometimes be easier
 but never harder.
 
\section{Notation and Preliminaries}\label{sec:notation}
  
\subsection{Locally maximal Constraint Satisfaction Problems} 
\label{sec:def}
  
A Boolean constraint language~$\Gamma$ is a set of relations on $\{0,1\}$.
Once we have fixed the constraint language $\Gamma$,
an \emph{instance}~$I$ of the CSP consists of a set~$V$ of \emph{variables\/}
and a set~$C$ of \emph{constraints}. Each
constraint has a \emph{scope,} which is a tuple of variables  
and a relation from~$\Gamma$ of the same
arity, which constrains the variables in the scope. An
\emph{assignment} $\sigma$ is a function from~$V$
to the Boolean domain~$\{0,1\}$. The
assignment~$\sigma$ is \emph{satisfying} if the scope of every
constraint is mapped to a tuple that is in the corresponding
relation. 
Given an assignment~$\sigma$, a variable~$v$, and a Boolean value $s$,
let $\tospin{\sigma}{v}{s}$  be the 
assignment defined as follows:  
$\tospin{\sigma}{v}{s}(v)=s$ and
for all $w\in V\setminus\{v\}$, $ \tospin{\sigma}{v}{s}(w)=\sigma(w)$.
(Thus, $\tospin{\sigma}{v}{s}$ agrees with~$\sigma$, except possibly at variable~$v$, which
it assigned Boolean value~$s$.)
We say that a satisfying assignment $\sigma$ is \emph{maximal} for~$v$ if either
\begin{itemize}
\item $\sigma(v)=1$, or
\item $\tospin{\sigma}{v}{1}$ is unsatisfying.
\end{itemize}
We say that the satisfying assignment $\sigma$ is \emph{locally maximal} if it is maximal for every variable $v\in V$.

For example, consider the ternary relation $R = \{(0,0,0), (0,0,1), (1,0,0), (0,1,1), (1,1,1)\}$
which excludes the three tuples $(0,1,0)$, $(1,1,0)$ and $(1,0,1)$. Let $\Gamma$ be the
size-one constraint language $\Gamma= \{R\}$.
Let $I$ be the instance with variable set $V = \{v_1,v_2, v_3,v_4,v_5\}$
and constraint set $C = \{R(v_1,v_2,v_3), R(v_3,v_4,v_5)\}$.  
Consider the following assignments.
$$
\begin{array}{|c|  ccccc |}
\hline
\sigma & \sigma(v_1) & \sigma(v_2) & \sigma(v_3) & \sigma(v_4) & \sigma(v_5)  \\
\hline 
\sigma_1 & 0 & 0 & 1 & 1 & 0\\
\sigma_2 & 0 & 0 & 1 & 1 & 1\\
\sigma_3 & 1 & 1 & 1 & 0 & 0\\
\hline
\end{array}
$$
The assignment $\sigma_1$ is not satisfying because the constraint 
$R(v_3,v_4,v_5)$ is not satisfied since $(1,1,0)$ is not in~$R$.
Assignments~$\sigma_2$ and~$\sigma_3$ are satisfying. 
Assignment~$\sigma_2$ 
is not maximal for $v_2$ since 
$\sigma_2(v_2)=0$ and
$\tospin{(\sigma_2)}{v_2}{1}$ is satisfying.
However, $\sigma_2$ is maximal for every other variable~$v_i$.
Assignment~$\sigma_3$ is locally maximal.

Given an instance~$I$ of a CSP with
constraint language $\Gamma$, the \emph{decision problem}
$\CSP{\Gamma}$  is to determine whether any assignment
satisfies~$I$. The \emph{counting problem} $\numCSP{\Gamma}$  is to
determine the \emph{number} of  satisfying
assignments of~$I$.  Finally, the 
\emph{locally maximal counting problem} $\numMaxCSP{\Gamma}$
 is to determine the number of locally maximal satisfying assignments of~$I$.

\subsection{Boolean Relations}   
\label{sec:boolrel}
   
A Boolean relation~$R$ is said to be \emph{$0$-valid} if the all-zero tuple is in~$R$
and it is said to be \emph{$1$-valid} if the all-one tuple is in~$R$.
 For every positive integer~$k$ and every $i\in\{1,\ldots,k\}$, let $e_{i,k}$ be the $k$-ary tuple
with a one in position~$i$ and zeroes in the other positions.
We say that a $k$-ary relation~$R$ is \emph{monotone} if, for every
tuple $(s_1,\ldots,s_k)\in R$ and every $i$,
the tuple $(s_1,\ldots,s_k) \vee e_{i,k}$ is also in~$R$, 
where $\vee$ is the \emph{or} operator, applied
position-wise.

The set of \emph{zero positions} of $R$, written $\zeroes{R}$, 
is $\{ i \in \{1,\ldots,k\} \mid 
\forall (s_1,\ldots,s_k) \in R, s_i=0
\}$.
Of course, $\zeroes{R}$ may be the empty set.
We use $\nonzeroes{R}$ to denote the set containing all other positions,
so $\nonzeroes{R} = [k]\setminus \zeroes{R}$.
We use $R^*$ to denote the relation induced on positions in~$\nonzeroes{R}$.
 We say that $R$ is \emph{essentially monotone} if  $R^*$ is monotone.
 
For example, the relation 
 $R=\{(0,0,1),(0,1,1),(1,1,1)\}$
is not monotone because the tuple $(0,0,1)$ has a zero in the first position 
but $(0,0,1) \vee (1,0,0) = (1,0,1)$ which is not in~$R$.
The relation 
$R = \{(0,0,1,1),(0,1,1,1)\}$
has $\zeroes{R} = \{1\}$ because all tuples in~$R$ have a zero in the first position,
and $\nonzeroes{R} = \{2,3,4\}$.
The relation $R^*$ induced on positions in $\nonzeroes{R}$ is 
$R^* = \{(0,1,1),(1,1,1)\}$. $R^*$ is monotone,  
so $R$ is essentially monotone.  

The binary implication relation $\Implies$
is defined as
$\Implies=\{(0,0),\allowbreak(0,1),\allowbreak(1,1)\}$.
We will also consider two unary relations~$\Uzero$ and~$\Uone$.
$\Uzero$ is defined by $\Uzero=\{(0)\}$.
The constraint $\Uzero(x)$ is often called ``pinning~$x$ to~$0$''.
Similarly, $\Uone$ is defined by 
$\Uone=\{(1)\}$ and
the constraint $\Uone(x)$
 is called ``pinning~$x$ to~$1$''.
 
 A Boolean \emph{co-clone}
 \cite{plainbases}
  is a set of Boolean relations
 containing the equality relation $\{(0,0),(1,1)\}$ 
 and closed under  
 certain operations.
 For completeness, the operations are 
 finite Cartesian products, projections, and identifications of variables, but
 it will not be necessary to define these here.
 We will not  require any information about co-clones,
 apart from the fact that the set of all affine relations is a co-clone, and so is
 a certain set $\IMtwo$ which we we have already discussed, and will define
below.

Suppose that $\Gamma$ is a co-clone.
A subset $B$ of $\Gamma$ is said to be a 
  ``plain basis'' for~$\Gamma$ \cite[Definition 1]{plainbases}
 if and only if 
it is the case that every  constraint~$C$ over~$\Gamma$
is logically equivalent to a conjunction of
constraints over~$B$  using the same variables as~$C$.

 Creignou et al.~\cite{plainbases} 
 have shown that 
 $\{\Implies,\Uzero,\Uone\}$ is a plain basis for the set~$\IMtwo$.
 In fact, $\IMtwo$ can just be defined  
 this way. A relation $R(x_1,\ldots,x_k)$ is in $\IMtwo$
 if and only if it is logically equivalent to
 a conjunction of constraints over 
 $\{\Implies,\Uzero,\Uone\}$ using the variables $x_1,\ldots,x_k$.
For example, consider the relation $R = \{(0,0,0,1),(0,1,1,1)\}$.
The relation~$R$ is in~$\IMtwo$ because
the constraint $R(w,x,y,z)$ is logically 
equivalent to the conjunction of
constraints 
$\Uzero(w)$, $\Implies(x,y)$, $\Implies(y,x)$ and $\Uone(z)$.
Note that the conjunction does not use any variables other than~$w$, $x$, $y$ and~$z$,
and this is important.

Let $\mathbb{L}$ denote the set of relations corresponding to
linear equations  over the two-element 
field $\mathbb{F}_2$.   
For example, the equality relation $\{(0,0),(1,1)\}$
is in~$\mathbb{L}$ because it corresponds to the equation $x_1\oplus x_2 = 0$.
Also,  the relation $\{(1,0,0),(0,1,0),(0,0,1),(1,1,1)\}$ 
is in~$\mathbb{L}$ because it corresponds to the equation $x_1\oplus x_2 \oplus x_3 = 1$.
Let $\mathbb{L}_k$ denote the set containing all relations in~$\mathbb{L}$
of arity at most~$k$.
We have already said that  a relation is \emph{affine} if it is expressible 
as the set of solutions to  a system of linear equations over  $\mathbb{F}_2$
and that a constraint language $\Gamma$ is affine iff every constraint in~$\Gamma$ is affine.
Creignou et al.~\cite{plainbases} note that $\mathbb{L}$ is a plain
basis for  
the set of affine relations. 
Thus,   if $\Gamma$ is affine, and every constraint in~$\Gamma$ has
arity at most~$k$, then 
every constraint~$C$ over~$\Gamma$ 
is logically equivalent to a conjunction of constraints over~$\mathbb{L}_k$ using the
same variables as~$C$.

\subsection{The complexity of approximate counting}
\label{sec:ap}

We now recall the necessary complexity-theoretic background from~\cite{DGGJ}.
A \emph{randomised approximation scheme\/} is an algorithm for
approximately computing the value of a
function~$f:\Sigma^*\rightarrow
\mathbb{N}$. The
approximation scheme has a parameter~$\varepsilon>0$
which specifies
the error tolerance.
A \emph{randomised approximation scheme\/} for~$f$ is a
randomised algorithm that takes as input an instance $ x\in
\alphabet^{\ast }$ (e.g., an encoding of a CSP
instance) and an error
tolerance $\varepsilon >0$, and outputs an
integer $z$
(a random variable on the ``coin tosses'' made by the algorithm)
such that, for every instance~$x$,
\begin{equation}
\label{eq:3:FPRASerrorprob}
\Pr \big[e^{-\epsilon} f(x)\leq z \leq
e^\epsilon f(x)\big]\geq \frac{3}{4}\, .
\end{equation}
The randomised approximation scheme is said to be a
\emph{fully polynomial randomised approximation scheme},
or \emph{FPRAS},
if it runs in time bounded by a polynomial
in $ |x| $ and $ \epsilon^{-1} $.
(See Mitzenmacher and Upfal~\cite[Definition 10.2]{MU05}.)

Suppose that $f$ and $g$ are functions from
$\alphabet^{\ast }$ to~$\mathbb{N}$. An
``approximation-preserving reduction'' (AP-reduction)
from $f$ to~$g$ is a randomised algorithm~$\mathcal{A}$ for
computing~$f$ using an oracle for~$g$.\footnote{The
reader who is not familiar with oracle Turing machines
can just think of this as an imaginary (unwritten)
subroutine for computing~$g$.}
The algorithm~$\mathcal{A}$ takes
as input a pair $(x,\varepsilon)\in\alphabet^*\times(0,1)$, and
satisfies the following three conditions: (i)~every oracle call made
by~$\mathcal{A}$ is of the form $(w,\delta)$, where
$w\in\alphabet^*$ is an instance of~$g$, and $0<\delta<1$ is an
error bound satisfying $\delta^{-1}\leq\poly(|x|,
\varepsilon^{-1})$; (ii) the algorithm~$\mathcal{A}$ meets the
specification for being a randomised approximation scheme for~$f$
(as described above) whenever the oracle meets the specification for
being a randomised approximation scheme for~$g$; and (iii)~the
run-time of~$\mathcal{A}$ is polynomial in $|x|$ and
$\varepsilon^{-1}$.    The key property of this notion of reducibility 
is that the class of functions computable by an FPRAS is closed 
under AP-reducibility.

If there is an AP-reduction from $f$ to $g$ then we say $f$ is \emph{AP-reducible
to~$g$} and write $f\APred g$.  If $f\APred g$ and $g\APred f$ then we say that 
$f$ and~$g$ are \emph{AP-interreducible} and write $f\APeq g$.
A class of counting problems that are all AP-interreducible 
has the property that either all problems in the class have an FPRAS 
or none do.  
A word of warning about terminology: the notation $\APred$ has also 
been used (see e.g.~\cite{crescenzi-badap}) to denote a different type of approximation-preserving reduction which applies to optimisation problems. We will not study optimisation problems in this paper, so hopefully this will not cause confusion.

The class of problems AP-interreducible with $\BIS$,
the problem of counting independent sets in a bipartite graph, has 
received particular attention.  It is generally believed that problems in
this class do not have an FPRAS.

\section{Our main results}

Our main results give a dichotomy for 
exactly solving $\numMaxCSP{\Gamma}$ and
a trichotomy for its approximation.
\begin{theorem}
\label{thm:exact}
Let $\Gamma$ be a constraint language with domain $\{0,1\}$. 
\begin{itemize}
\item 
If every relation in $\Gamma$ is essentially monotone then
$\numMaxCSP{\Gamma}$ is in $\FP$.
\item
If every relation in $\Gamma$ is affine then
$\numMaxCSP{\Gamma}$ is in $\FP$. 
\item Otherwise, $\numMaxCSP{\Gamma}$ is $\numP$-complete.
\end{itemize}

\end{theorem}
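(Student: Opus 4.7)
I would treat the three cases separately.

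\emph{Essentially-monotone case.} Let $F \subseteq V$ consist of the variables appearing in some zero-position of some constraint of the input instance $I$; each must be set to $0$ in any satisfying assignment. I claim there is at most one locally maximal satisfying assignment, namely the one that sets $F$ to $0$ and every other variable to $1$. Indeed, if $v \notin F$ and $\sigma(v) = 0$, then $v$ appears only in non-zero positions of its constraints; by monotonicity of each induced relation $R^*$, the restricted tuple remains in $R^*$ when we change $\sigma(v)$ from $0$ to $1$, so $\tospin{\sigma}{v}{1}$ is still satisfying and $\sigma$ is not locally maximal. The algorithm constructs this unique candidate and verifies satisfiability in polynomial time, placing the problem in $\FP$.

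\emph{Affine case.} Using that $\mathbb{L}$ is a plain basis for the affine co-clone, I rewrite~$I$ (without adding variables) as a single linear system $Ax = b$ over $\mathbb{F}_2$ whose solutions are exactly the satisfying assignments. Flipping variable $v$ in~$\sigma$ changes $Ax$ by the $v$-th column of~$A$; hence $\sigma$ is locally maximal iff every variable~$v$ whose $v$-th column of~$A$ is zero satisfies $\sigma(v) = 1$. Counting such solutions reduces to substituting $x_v = 1$ for each such~$v$ and counting solutions of the remaining linear system --- a standard $\mathbb{F}_2$ linear-algebra computation.

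\emph{Otherwise case.} Containment of $\numMaxCSP{\Gamma}$ in $\numP$ is immediate because local maximality is polynomial-time verifiable. For hardness, since $\Gamma$ is not affine, Theorem~\ref{thm:CH} yields that $\numCSP{\Gamma}$ is $\numP$-complete, so I aim to reduce $\numCSP{\Gamma}$ to $\numMaxCSP{\Gamma}$. My strategy is to augment the input instance~$I$ with a \emph{blocker gadget} for every variable, guaranteeing that every satisfying assignment of the augmented instance is automatically locally maximal while preserving the count up to a polynomial-time-computable factor. The gadget exploits a non-essentially-monotone $R \in \Gamma$: there exist a tuple $\tau = (s_1,\ldots,s_k) \in R$ and a position $i \in \nonzeroes{R}$ with $s_i = 0$ such that $\tau$ with position~$i$ flipped to~$1$ is not in~$R$. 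For each variable~$v$ of~$I$, I introduce fresh auxiliary variables $w_j$ (for $j \neq i$) and append the constraint $R(w_1,\ldots,w_k)$ with $w_i = v$, together with auxiliary constraints that pin each $w_j$ to the constant~$s_j$. The pinning forces the gadget tuple to equal~$\tau$, so flipping~$v$ to~$1$ produces the forbidden tuple and thereby makes~$v$ locally maximal at~$0$. A parallel blocker is installed for every auxiliary variable.

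The main obstacle will be implementing the pinning of the $w_j$ to the specific constants $s_j$ using only relations available in~$\Gamma$. The unary relations $\Uzero$ and $\Uone$ need not lie in~$\Gamma$, so I would have to construct them by primitive-positive gadgets from~$\Gamma$, or, failing that, redesign the blocker gadget (for example by exploiting zero-positions of~$R$ or of another relation in~$\Gamma$) so as to sidestep explicit constants. I expect this part of the argument to be delicate and to exploit the combination of non-essential-monotonicity and non-affineness in the style of the Boolean co-clone classifications of Schaefer and Creignou--Hermann.
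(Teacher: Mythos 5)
Your first two cases are correct and follow essentially the same route as the paper's Lemmas~\ref{lem:mon} and~\ref{lem:affine}: a unique candidate locally maximal assignment in the essentially monotone case, and, in the affine case, the observation that a satisfying assignment of the linear system is locally maximal precisely when every variable that is free to flip (zero column of~$A$, i.e.\ effectively unconstrained) is set to~$1$.

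The hardness case, however, has a genuine gap, and the blocker gadget as you describe it is not just incomplete but wrong. You pin $w_j$ to $s_j$ for every $j\neq i$ and put $v$ in position~$i$, where $\tau\in R$, $s_i=0$, and $\tau$ flipped at position~$i$ is not in~$R$. Then whenever $\sigma(v)=1$ the gadget constraint evaluates to exactly the forbidden tuple, so the gadget does not merely block the flip $0\to 1$: it forbids $\sigma(v)=1$ outright, destroying the correspondence between satisfying assignments of the original instance and of the augmented one. A correct gadget must admit, for \emph{each} of the two values of~$v$, exactly one suitably maximal extension to the auxiliary variables, which forces the auxiliaries to co-vary with~$v$ rather than being pinned to constants. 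This is precisely what the paper's notion of a \emph{maximality gadget} (Definition~\ref{def:gadget}) captures: an $\{R\}$-instance $I$ with distinguished variable~$r$ satisfying $\maxzero{I}{r}=\maxone{I}{r}=1$ and $\bad{I}{r}=0$.

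Moreover, the fallback you propose --- pp-defining $\Uzero$ and $\Uone$ from $\Gamma$ --- is impossible in general: a relation can be $0$-valid, $1$-valid, non-affine and not essentially monotone all at once (e.g.\ the paper's example $R=\{(0,0,0),(0,0,1),(1,0,0),(0,1,1),(1,1,1)\}$), and for such a language no conjunction of constraints can pin any variable, since the all-zeros and all-ones assignments satisfy everything. The paper's Lemma~\ref{lem:gadget} resolves this by a four-way case analysis on the $0$- and $1$-validity of~$R^*$, simulating pinning by constraints of the form $R^*(w,\dots,w)$ when exactly one constant tuple lies in~$R^*$, and using a symmetric two-constraint construction when both do; this case analysis is the technical core of the hardness proof and is absent from your argument. (A minor remark: because you reduce from $\numCSP{\Gamma}$ for the whole language rather than from a single-relation problem, you can legitimately skip the paper's step of taking the Cartesian product of a non-affine relation with a non-essentially-monotone one; that part of your plan is fine.)
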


\begin{theorem}  
\label{thm:main}
Let $\Gamma$ be a constraint language with domain $\{0,1\}$. 
\begin{itemize}
\item 
If every relation in $\Gamma$ is essentially monotone then
$\numMaxCSP{\Gamma}$ is in $\FP$.
\item
If every relation in $\Gamma$ is affine then
$\numMaxCSP{\Gamma}$ is in $\FP$. 
\item
If $\Gamma$ has a relation that is not essentially monotone
and $\Gamma$ has a relation that is not affine,
but  every relation in~$\Gamma$ is in~$\IMtwo$  
then $\numMaxCSP{\Gamma} \APeq \BIS$.
\item
If $\Gamma$ has a relation that is not  essentially monotone and
a relation that is not affine and a relation that is not in~$\IMtwo$ then
$\numMaxCSP{\Gamma} \APeq \SAT$.
\end{itemize}
\end{theorem}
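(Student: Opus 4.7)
The plan is to handle the four cases separately, relying on Theorem~\ref{thm:trichotomy} for the class of $\numCSP{\Gamma}$. The two $\FP$ cases both admit direct polynomial-time algorithms. In the essentially monotone case, I would argue that there is at most one locally maximal satisfying assignment: any variable occupying a zero-position of some constraint must be set to~$0$, while any other variable must be set to~$1$, because monotonicity of each~$R^*$ on $\nonzeroes{R}$ guarantees that lifting a non-zero-position variable to~$1$ cannot violate any constraint. So we compute this canonical candidate, test satisfaction in polynomial time, and return~$0$ or~$1$. In the affine case, the satisfying assignments of an instance form an affine subspace of~$\mathbb{F}_2^n$, and $\sigma$ is locally maximal iff every variable~$v$ with $\sigma(v)=0$ has a nonzero column in the coefficient matrix (otherwise flipping~$v$ preserves every equation). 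Pinning every all-zero-column variable to~$1$ and then applying Gaussian elimination to the remaining linear system yields the count.

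\textbf{Case~3} ($\numMaxCSP{\Gamma}\APeq\BIS$). For the lower bound $\BIS\APred\numMaxCSP{\Gamma}$, I would use the \emph{2-cycle trick}: given a bipartite graph~$(U,V,E)$, for each vertex $x\in U\cup V$ introduce a CSP variable~$x$ together with an auxiliary companion~$x'$ joined by two implications $\Implies(x,x')$ and $\Implies(x',x)$, and for each edge $uv$ add $\Implies(u,v)$. The 2-cycles force $\sigma(x)=\sigma(x')$, and they make every satisfying assignment automatically locally maximal, since flipping any $0$-valued variable of a 2-cycle violates the implication to its partner. So $\BIS$ reduces parsimoniously to $\numMaxCSP{\{\Implies\}}$, resolving the paradox highlighted in the introduction. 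To run this construction inside a general~$\Gamma$ I would invoke the plain basis characterisation of $\IMtwo$ from~\cite{plainbases}: the assumption that~$\Gamma$ has a non-essentially-monotone relation forces $\Implies$ to appear in the plain-basis expansion of some~$R\in\Gamma$, and the other conjuncts (pins and extra implications) can be isolated by variable identification and the use of fresh variables. For the upper bound $\numMaxCSP{\Gamma}\APred\BIS$, I would rewrite each constraint via the plain basis of~$\IMtwo$, propagate all $\Uzero$ and $\Uone$ pins, and analyse the resulting pure implication instance through its condensation SCC-DAG: within a non-singleton SCC (or a singleton with a self-loop) local maximality is automatic because any flip violates an internal implication, so only ``singleton-without-self-loop'' SCCs carry a genuine condition (some out-edge of the condensation must leave the chosen upward-closed set). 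The remaining task is to AP-reduce this restricted count of upward-closed sets to the unrestricted count of upward-closed sets of a poset, which is $\BIS$-equivalent by Theorem~\ref{thm:trichotomy}.

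\textbf{Case~4} ($\numMaxCSP{\Gamma}\APeq\SAT$) \textbf{and main obstacle.} For the upper bound $\numMaxCSP{\Gamma}\APred\SAT$, I would use a direct parsimonious CNF encoding: clauses enumerate each~$R$'s satisfying tuples, and for each variable~$v$ we introduce auxiliary variables and clauses expressing ``$\sigma(v)=1$ or some constraint with $v$ in scope is violated by flipping $v$ to~$1$''. For the lower bound $\SAT\APred\numMaxCSP{\Gamma}$, I would combine the 2-cycle trick with a gadget exploiting the non-$\IMtwo$ relation in~$\Gamma$ (the same structural feature that lifts $\numCSP{\Gamma}$ to the $\SAT$ class in Theorem~\ref{thm:trichotomy}), so that $\numCSP{\Gamma}$ reduces parsimoniously to $\numMaxCSP{\Gamma}$. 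The subtlest step in the whole proof will be the Case~3 upper bound: the local-maximality condition at singleton SCCs is a disjunction over outgoing edges of the condensation DAG and is not directly expressible in $\IMtwo$, and a naive inclusion--exclusion over the set of violated singletons has exponentially many terms. The delicate point will be a structural transformation of the condensation DAG (for instance by attaching a fat ``dummy above'' to each singleton that would otherwise violate the condition) that produces a count that equals, or at least AP-approximates, the restricted count, while staying in the $\BIS$ class rather than leaking into the $\SAT$ class.
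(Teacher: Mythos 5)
Your two FP cases (essentially monotone and affine) match the paper's reasoning. Your 2-cycle observation is also correct in spirit and is in fact a special case of the paper's central tool, but your hardness and upper-bound arguments both have real gaps.

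For the lower bounds (Cases~3 and~4), the paper does not try to "extract" $\Implies$ from a plain-basis decomposition. Instead it introduces a general \emph{maximality gadget}: a small $\{R\}$-instance $I$ with distinguished variable $r$ such that $\maxzero{I}{r}=\maxone{I}{r}=1$ and $\bad{I}{r}=0$ (Definition~\ref{def:gadget}). Lemma~\ref{lem:gadget} proves such a gadget exists for \emph{every} non-essentially-monotone $R$, by a careful case split on whether $R^*$ is $0$-valid and/or $1$-valid; your 2-cycle is precisely what the construction yields for $R=\Implies$, but the general case requires the full analysis. Gluing one copy of the gadget onto each variable of a $\numCSP{\{R\}}$ instance then gives a parsimonious reduction to $\numMaxCSP{\{R\}}$. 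Two concrete problems with your route: (i) plain-basis expansions exist only for relations inside a fixed co-clone, so in Case~4 where some $R\notin\IMtwo$ you have no plain-basis expansion in $\{\Implies,\Uzero,\Uone\}$ to extract $\Implies$ from, and the "2-cycle + non-$\IMtwo$ gadget" hybrid is not spelled out; and (ii) even when $R\in\IMtwo$, isolating a single $\Implies$ conjunct from the plain-basis form of $R(\cdot)$ by variable identification and fresh variables is not something you can do freely --- the auxiliary variables introduce their own local-maximality constraints that must themselves be controlled, which is exactly what the gadget definition is designed to handle.

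For the Case~3 upper bound, you correctly identify the problem (singleton SCCs of the implication digraph carry a genuine maximality condition that is a disjunction over out-edges), but your proposed fix --- a "dummy above" attachment or an inclusion--exclusion --- is left as a hope and is not the paper's argument. The paper's Lemma~\ref{lem:bistwo} instead performs a sequence of exact, count-preserving local contractions: if $\{v\}$ is a singleton SCC of $G(I)$, delete $v$ and add $\Implies(u,w)$ for every in-neighbour $u$ and out-neighbour $w$; a three-way partition of locally maximal assignments (according to whether some predecessor is $1$, some successor is $0$, or neither) gives a bijection showing the count is unchanged. Iterating until no singleton SCC remains leaves an instance where every satisfying assignment is automatically locally maximal, and then $\numCSP{\{\Implies\}}\APred\BIS$ finishes. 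This elimination step is the missing idea you would need; without it, or something equivalent, your sketch does not close.

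Your Case~4 upper bound is also more complicated than necessary: since $\numMaxCSP{\Gamma}\in\numP$ outright, $\numMaxCSP{\Gamma}\APred\SAT$ follows immediately from the fact that every $\numP$ problem AP-reduces to $\SAT$; no bespoke CNF encoding of local maximality is needed.
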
   

Theorem~\ref{thm:exact} shows that if 
$\Gamma$ contains a relation~$R$ that is not essentially monotone and a relation~$R'$
that is not affine then $\numMaxCSP{\Gamma}$ is $\numP$-complete.
It is not necessary for~$R$ and~$R'$ to be distinct.
Similarly, the relations  in~$\Gamma$ witnessing 
``not essentially  monotone'', non-affineness  and   
non-contain\-ment in~$\IMtwo$ will not in general be distinct.

\section{Other notions of maximality}
\label{sec:other}

In Section~\ref{sec:def} we defined \emph{local maximality}.
A satisfying assignment~$\sigma$ is 
locally maximal if, 
for all $v$ with $\sigma(v)=0$, the configuration
$\tospin{\sigma}{v}{1}$ 
which is obtained from~$v$ by locally flipping 
the value of~$\sigma(v)$ from~$0$ to~$1$ is unsatisfying.

The study of approximately counting locally-optimal structures is motivated by the following
often-arising situation which is associated, for example, with 
Johnson, Papadimitriou and Yannakakis's
complexity class PLS (polynomial-time local search) \cite{JYPPLS}.
Often, it is easy to construct an arbitrary structure, difficult to construct a globally-optimal
structure, and of intermediate complexity to construct a locally-optimal structure.
A similar phenomenon arises in the study of \emph{listing} combinatorial structures
(see \cite{MaximalBIS} for details). 
Notably, \cite{MaximalBIS} found that this situation is not replicated in the context
of approximately counting independent sets in bipartite graphs --- a context in which approximately 
counting locally-optimal structures is (subject to complexity-theoretic assumptions) \emph{more
difficult} than counting globally-optimal structures.
This was the motivation for the present paper, which studies the problem of 
approximately counting locally-optimal 
(locally maximal) structures
in the context of CSPs. 
 
In this section, we  note that there are also other definitions of ``maximal'' that are not related to 
local optimality. One such example arises in the work of Durand, Hermann and Kolaitis~\cite{DHK}.
Following them, we describe their work in terms of \emph{minimality} rather than \emph{maximality}, but
this is not an essential difference.
The essential difference is that their notion of minimality  is 
based on subset inclusion rather than on local changes.
They  say that  a satisfying assignment $\sigma$ of a Boolean formula  is \emph{minimal}
if there is no satisfying assignment $\sigma'$ derived
from~$\sigma$ by  flipping any non-empty \emph{set} of values in the assignment 
from~$1$ to~$0$. 
The analogous version of our definition would only allow one value to flip.
They studied
 the problem $\circum$, in which the input is a Boolean formula, and the output
is the number of minimal satisfying assignments.
Assuming that the counting hierarchy does not collapse, this problem is not even in $\numP$.
Indeed, they show \cite[Theorem 5.1]{DHK} that it is complete in 
Hemaspaandra and Vollmer's complexity class
$\numdotcoNP$ \cite{HV}, which is equivalent to Valiant's class $\numNP$
\cite{Val1,Val2}.\footnote{A theorem of Toda and Watanabe~\cite{TodaWat} 
tells us that $\numNP$ is the same as $\numP$
once we close with respect to polynomial-time Turing reductions, but
Durand et al.\ study  higher counting classes via subtractive reductions under which the
higher counting classes are closed.} In \cite{DH}, Durant and Hermann
investigate the complexity of $\circum$ when the input Boolean formula has a prescribed form.
Their result contrasts   sharply with Theorem~\ref{thm:exact}.
For example, they show \cite[Theorem 4]{DH} that $\circum$ is $\numP$-complete when the
formula is restricted to be affine, even though affine is one of the easy cases in Theorem~\ref{thm:exact}.
This difference illustrates how
different 
subset-inclusion maximality and local maximality
really are. 
 
Counting 
globally optimal (\emph{maximum}) structures is different 
from counting 
either type of
\emph{maximal} structures and there is also some interesting
work about the former \cite{PB,HP}.

\section{Proofs}

We now give the proofs of our main theorems, Theorem~\ref{thm:exact} and Theorem~\ref{thm:main}.
We start with Theorem~\ref{thm:exact}.
Given a constraint language~$\Gamma$, it is easy to see that
$\numMaxCSP{\Gamma} \in \numP$ --- this is witnessed by the brute-force algorithm
which checks every assignment to the CSP instance and checks whether it is
a locally maximal satisfying assignment.
(Note that this check can be done in polynomial time.)
Theorem~\ref{thm:exact} follows from this fact and from 
Lemmas~\ref{lem:mon}, \ref{lem:affine} and \ref{lem:numphard},
which we will prove in the remainder of the paper.

\newcommand{\statelemmon}{Let $\Gamma$ be a constraint language with domain $\{0,1\}$. 
If every relation in $\Gamma$ is  essentially monotone then
$\numMaxCSP{\Gamma}$ is in $\FP$.}
\begin{lemma}
\label{lem:mon}
\statelemmon{}
\end{lemma}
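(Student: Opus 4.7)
The plan is to show that any instance~$I$ of $\numMaxCSP{\Gamma}$ has at most one locally maximal satisfying assignment, and that this unique candidate can be computed and verified in polynomial time. Thus $\numMaxCSP{\Gamma}$ reduces to a single satisfiability check and is in~$\FP$.

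First I would unpack essential monotonicity. For each relation~$R$ of arity~$k$ in~$\Gamma$, the positions in $\zeroes{R}$ are forced to~$0$ in every tuple of~$R$, and $R^*$ is upward-closed under flipping~$0$s to~$1$s; in particular, whenever $R^*$ is non-empty it contains the all-ones tuple. Given an instance with variable set~$V$, I would then define $Z \subseteq V$ to be the set of variables appearing at some position in $\zeroes{R}$ of some constraint with relation~$R$. The set $Z$ is computable in linear time from~$I$.

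The central claim is that any locally maximal satisfying assignment must be the candidate $\sigma^*$ defined by $\sigma^*(v)=0$ for $v\in Z$ and $\sigma^*(v)=1$ otherwise. The direction $\sigma(v)=0$ for $v \in Z$ is immediate from the definition of $\zeroes{R}$. For $v \notin Z$, every constraint containing~$v$ places~$v$ at a position in $\nonzeroes{R}$, so by the upward-closure of $R^*$, replacing $\sigma(v)=0$ by~$1$ keeps each relevant tuple inside~$R$; hence $\tospin{\sigma}{v}{1}$ would be satisfying, contradicting local maximality. A short parallel argument shows that if \emph{any} satisfying assignment $\sigma$ exists then $\sigma^* \ge \sigma$ pointwise (since $\sigma(v)=0$ is already forced on $Z$), and upward-closure of each~$R^*$ then yields that $\sigma^*$ is itself satisfying.

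The algorithm therefore computes $Z$ and $\sigma^*$, checks in polynomial time whether $\sigma^*$ satisfies every constraint, and returns~$1$ or~$0$ accordingly. I do not anticipate a real obstacle: the decomposition $R = \{t : t|_{\zeroes{R}}=0,\ t|_{\nonzeroes{R}} \in R^*\}$ makes the structure transparent, and monotonicity of $R^*$ is used uniformly both to rule out other locally maximal satisfying assignments and to lift any satisfying assignment to~$\sigma^*$. The only care required is to verify that $\sigma^*$ can fail to satisfy~$I$ precisely when $I$ has no satisfying assignment at all, so that outputting~$0$ in that case is correct.
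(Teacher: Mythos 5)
Your proposal is correct and follows essentially the same route as the paper: identify the variables forced to~$0$ by zero positions, argue that monotonicity of each $R^*$ forces every remaining variable to~$1$ in any locally maximal satisfying assignment, and reduce the count to a single polynomial-time satisfiability check of that unique candidate. Your write-up is, if anything, slightly more explicit than the paper's about why the candidate is satisfying whenever any satisfying assignment exists.
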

   
\newcommand{\statelemaffine}{Let $\Gamma$ be a constraint language with domain $\{0,1\}$. 
If every relation in $\Gamma$ is affine then
$\numMaxCSP{\Gamma}$ is in $\FP$.}   
\begin{lemma}
\label{lem:affine}
\statelemaffine{}
\end{lemma}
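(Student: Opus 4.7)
The plan is to reduce the problem to a routine linear algebra computation over $\mathbb{F}_2$. Fix $\Gamma$ affine with constraints of arity at most some constant~$k$. By the plain basis result of Creignou et al.~\cite{plainbases} quoted in Section~\ref{sec:boolrel}, any given instance $I$ of $\numMaxCSP{\Gamma}$ on variable set $V$ can be rewritten in polynomial time, constraint by constraint, as an equivalent system of linear equations $Ax=b$ over $\mathbb{F}_2$ with the same variable set~$V$. The crucial point is that the plain basis rewriting uses no new variables, so the set $S\subseteq\mathbb{F}_2^V$ of satisfying assignments of $I$ coincides exactly with the solution set of $Ax=b$, and hence notions of ``maximal for $v$'' and ``locally maximal'' carry over unchanged.

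The next step is to analyse the local-maximality condition. If $Ax=b$ has no solution we simply output~$0$. Otherwise, pick any particular solution $\sigma_0$; then $S=\sigma_0+W$ where $W=\ker A$. For any $\sigma\in S$ and variable $v$ with $\sigma(v)=0$ we have $\tospin{\sigma}{v}{1}=\sigma+e_v$, where $e_v\in\mathbb{F}_2^V$ is the indicator of~$v$. Hence $\tospin{\sigma}{v}{1}\in S$ if and only if $e_v\in W$, and this criterion depends only on $W$, not on~$\sigma$. Define
\[
V_{\mathrm{free}}=\{v\in V:e_v\in W\}=\{v\in V:Ae_v=0\},
\]
which can be computed in polynomial time by inspecting the columns of~$A$. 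A satisfying assignment $\sigma\in S$ is then locally maximal if and only if $\sigma(v)=1$ for every $v\in V_{\mathrm{free}}$: for $v\in V_{\mathrm{free}}$ the criterion of maximality for~$v$ forces $\sigma(v)=1$, while for $v\notin V_{\mathrm{free}}$ maximality for~$v$ holds automatically since flipping $v$ from~$0$ to~$1$ leaves~$S$.

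It remains to count these assignments. Augment the system by appending the equations $x_v=1$ for every $v\in V_{\mathrm{free}}$ to get a new linear system $A'x=b'$; its solution count equals the number of locally maximal satisfying assignments of~$I$. The augmented system is consistent: starting from~$\sigma_0$, the assignment $\sigma_0+\sum_{v\in V_{\mathrm{free}},\,\sigma_0(v)=0}e_v$ lies in $S$ (each $e_v\in W$) and satisfies all the added equations. So by Gaussian elimination we obtain the answer as $2^{\dim\ker A'}$ in polynomial time.

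There is no real obstacle here; the only thing to keep in mind is that the plain basis rewriting preserves the variable set, so that the locally maximal satisfying assignments of the CSP instance are in one-to-one correspondence with those of the produced linear system. Given that, the fact that $V_{\mathrm{free}}$ is a property of the affine subspace $W$ rather than of individual assignments reduces the counting problem to counting solutions of a linear system with some variables pinned, which is in $\FP$.
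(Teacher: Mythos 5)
Your proof is correct and follows essentially the same route as the paper's: rewrite the instance as a logically equivalent linear system over $\mathbb{F}_2$ on the same variables via the plain basis, observe that local maximality forces exactly the variables with zero column (the paper's ``unconstrained'' variables) to take value~$1$ while all other variables are automatically maximal, and finish with Gaussian elimination. Your linear-algebraic phrasing (pinning $V_{\mathrm{free}}$ by extra equations rather than deleting those variables) is only a cosmetic difference.
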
 

\newcommand{\statelemnumphard}{Let $\Gamma$ be a constraint language with domain $\{0,1\}$. 
If $\Gamma$ has a relation that is not affine and a relation that is not essentially monotone
then $  \numMaxCSP{\Gamma}$ is $\numP$-hard.}\begin{lemma}
\label{lem:numphard}
\statelemnumphard{}
\end{lemma}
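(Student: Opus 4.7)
The plan is to give a polynomial-time Turing reduction from $\numCSP{\Gamma}$ to $\numMaxCSP{\Gamma}$. Combined with Theorem~\ref{thm:CH}, which gives $\numP$-hardness of $\numCSP{\Gamma}$ from the existence of a non-affine relation in~$\Gamma$, this establishes $\numP$-hardness of $\numMaxCSP{\Gamma}$.

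Fix a $k$-ary relation $R\in\Gamma$ that is not essentially monotone, witnessed by $t\in R$ and $i\in\nonzeroes{R}$ with $t_i=0$ and $t\vee e_{i,k}\notin R$. Using $R$, we will construct a \emph{local maximality gadget} $G(v)$ with one external variable~$v$ and a bounded number of fresh internal variables that appear in no other constraint. The gadget is designed so that (i)~in every locally maximal satisfying extension with $\sigma(v)=0$, the gadget itself witnesses $v$'s local maximality (i.e.\ flipping $v$ to~$1$ violates some gadget constraint); and (ii)~for each $s\in\{0,1\}$, the number $c_s$ of locally maximal extensions of $G(v)$ when $\sigma(v)=s$ is a positive integer depending only on~$R$, computable by brute force. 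The basic building block is the single constraint $R(y_1,\ldots,y_{i-1},v,y_{i+1},\ldots,y_k)$: when $y_j=t_j$ for each $j\neq i$, the condition $t\vee e_{i,k}\notin R$ immediately gives property~(i). To make that configuration of the $y_j$ the only locally maximal one, we adjoin further copies of $R$ that anchor each $y_j$ to~$t_j$; the non-essentially-monotone structure of~$R$ is what makes such anchoring possible.

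Given an input instance $I=(V,C)$ of $\numCSP{\Gamma}$, we form $I'_r$ by attaching $r$ independent copies of $G(v)$ to each $v\in V$. A direct count shows that the number of locally maximal satisfying assignments of $I'_r$ equals
\[
\sum_{\sigma\text{ satisfies }I} c_0^{r\,n_0(\sigma)}\,c_1^{r\,n_1(\sigma)},
\]
where $n_s(\sigma)=|\{v\in V:\sigma(v)=s\}|$. If $c_0=c_1$, one oracle call at $r=1$ recovers $Z=|\{\sigma\text{ satisfies }I\}|$ after dividing by~$c_0^{|V|}$. Otherwise, setting $\lambda=c_1/c_0$ and $N_j=|\{\sigma\text{ sat.\ }I:n_1(\sigma)=j\}|$, the oracle output on $I'_r$ is $c_0^{r|V|}\sum_j N_j\lambda^{rj}$, so the calls for $r=1,\ldots,|V|+1$ furnish $|V|+1$ evaluations of the polynomial $\sum_j N_j x^j$ at the distinct points $\lambda^r$ (distinct since $\lambda$ is a positive rational and $\lambda\neq 1$). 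Lagrange interpolation then recovers the~$N_j$, whence $Z=\sum_j N_j$.

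The main technical obstacle is guaranteeing property~(i): a naive single $R$-constraint admits locally maximal configurations of the $y_j$ for which $(y,1)\in R$, so that flipping $v$ is not blocked by the gadget alone. Overcoming this requires the additional anchoring constraints to propagate the blocked-flip structure from position~$i$ of the original witness through the internal variables, together with a case analysis establishing that no locally maximal internal configuration other than $y_j=t_j$ survives. This is where the non-essentially-monotone hypothesis on~$R$ is used in an essential way; a monotone $R^*$ cannot provide the anchors needed to eliminate unwanted locally maximal configurations.
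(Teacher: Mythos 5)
Your outer reduction framework is sound, and in fact slightly more flexible than the paper's: the paper arranges a gadget with exactly one locally maximal extension for each value of the distinguished variable (so a single parsimonious oracle call suffices), whereas you allow $c_0\neq c_1$ and recover the count by polynomial interpolation over $r$ copies, which is a legitimate Turing reduction for exact counting. (The paper also packages the non-affine and non-essentially-monotone witnesses into a single product relation $R_1\times R_2$ and reduces $\numCSP{\{R\}}$ to $\numMaxCSP{\{R\}}$; your choice to keep the original instance and build gadgets only from the non-essentially-monotone relation is an equivalent restructuring.)

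The genuine gap is that the maximality gadget is never constructed: you state its required properties, sketch a ``basic building block,'' and then explicitly defer the part that makes it work (``overcoming this requires the additional anchoring constraints \dots together with a case analysis''). That deferred step \emph{is} the proof --- in the paper it occupies Lemma~\ref{lem:gadget}, a four-way case analysis on whether $R^*$ is $0$-valid and/or $1$-valid, with carefully chosen tuples ($s$ of maximum Hamming weight, $s'$ witnessing a nonzero coordinate, $t=s\wedge\neg s'$, etc.), preceded by Lemma~\ref{lem:star} to dispose of the forced-zero coordinates $\zeroes{R}$, which your sketch does not address at all. Moreover, the specific plan you propose --- ``adjoin further copies of $R$ that anchor each $y_j$ to $t_j$'' --- is not realizable in general: if $R^*$ is both $0$-valid and $1$-valid, no conjunction of $R$-constraints can force a variable toward a prescribed value (the all-zero and all-one assignments always satisfy everything), and the paper's gadget in that case looks nothing like anchoring; it is a two-variable instance whose only satisfying assignments are all-$0$ and all-$1$. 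So while your claim that non-essential-monotonicity suffices is true, the argument for it is absent, and the one concrete construction idea you offer fails in at least one of the necessary cases.
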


 All problems in $\numP$ are AP-reducible to~$\SAT$ \cite{DGGJ}
so $\numMaxCSP{\Gamma} \APred \SAT$.   
Theorem~\ref{thm:main} follows from 
 this fact, 
 from Lemmas~\ref{lem:mon} and \ref{lem:affine}
 and from the following additional lemmas, which we will also prove 
in the remainder of the paper.

\newcommand{\statelembiseasy}{Let $\Gamma$ be a constraint language with domain $\{0,1\}$. 
If every relation in~$\Gamma$ is in~$\IMtwo$  
then $\numMaxCSP{\Gamma} \APred \BIS$.}   
\begin{lemma}
\label{lem:biseasy}
\statelembiseasy{}
\end{lemma}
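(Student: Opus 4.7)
The plan is to establish an AP-reduction $\numMaxCSP{\Gamma}\APred\numCSP{\{\Implies\}}$ and then invoke Theorem~\ref{thm:trichotomy}, which supplies $\numCSP{\{\Implies\}}\APeq\BIS$. The preliminary observation is that, because $\{\Implies,\Uzero,\Uone\}$ is a plain basis for $\IMtwo$, each constraint of an input instance can be rewritten on the same variables as a conjunction of $\Implies$, $\Uzero$ and $\Uone$ constraints. Crucially no auxiliary variables are introduced, so the rewritten instance has identically the same set of satisfying assignments, hence the same set of locally maximal satisfying assignments. We may therefore assume every constraint is one of $\Implies$, $\Uzero$, or $\Uone$.

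Next I will perform the standard pinning propagation: close the $\Uzero$-pinned variables under implication predecessors and the $\Uone$-pinned ones under implication successors. A contradiction returns $0$; otherwise every propagated variable is forced in every satisfying assignment, and a forced variable is automatically locally maximal (a forced $1$ trivially; a direct pin to $0$ because flipping violates $\Uzero$; a propagated $0$ because it has some forced-to-$0$ out-neighbour $w$, and flipping to $1$ then violates $\Implies(v,w)$). Local maximality therefore needs to be checked only at the free variables, and the only constraints linking free variables non-trivially are implications between free variables (all other crossing implications are either trivially satisfied or would themselves have forced one of their endpoints).

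The main structural step analyses local maximality on the free-variable implication digraph $D_f$. Let $D_f^*$ denote its SCC condensation, equipped with the reachability partial order $\le$. Satisfying extensions of the forced partial assignment biject with down-sets $T$ of $D_f^*$, where $T$ is the set of SCCs labelled $0$. Local maximality at a free $v$ with $\sigma(v)=0$ requires an out-neighbour $w\ne v$ with $\sigma(w)=0$. If $v$'s SCC $A$ has size at least $2$, strong connectivity supplies such a $w$ inside $A$, so local maximality is automatic; if instead $A=\{v\}$ is a singleton, no self-loop at $v$ can supply it (the self-loop only offers $w=v$), so local maximality demands that some direct successor of $A$ in $D_f^*$ also lie in $T$. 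Calling an SCC \emph{trivial} if it is a singleton, regardless of any self-loop, and writing $NT$ for the set of non-trivial SCCs, the locally maximal satisfying assignments correspond exactly to the down-sets $T$ of $D_f^*$ whose maximal elements (in $\le$) all lie in $NT$. The standard ``maximal elements'' bijection between down-sets and antichains then gives a further bijection with the down-sets of the sub-poset induced on $NT$.

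To finish, counting down-sets of a finite poset is itself an instance of $\numCSP{\{\Implies\}}$: introduce one variable per poset element and one constraint $\Implies(y_B,y_A)$ per comparable pair $A<B$. All the preprocessing described above is polynomial-time and parsimonious, so composing with the AP-reduction $\numCSP{\{\Implies\}}\APred\BIS$ from Theorem~\ref{thm:trichotomy} yields the required $\numMaxCSP{\Gamma}\APred\BIS$. The point I expect to demand most care is the trivial/non-trivial SCC dichotomy: recognising that a self-loop on a singleton SCC fails to help with local maximality---because local maximality at $v$ requires a violating out-neighbour \emph{distinct from $v$}---is what makes the bijection with down-sets of $NT$ come out correctly, and this is also where the analogy with \MaximalBIS{} breaks down (there, the asymmetric 0/1-encoding of vertices on the two sides of the bipartition plays the role that is played here by the free/forced and trivial/non-trivial distinctions).
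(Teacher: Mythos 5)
Your argument is correct, and its first half---rewriting each constraint over the plain basis $\{\Implies,\Uzero,\Uone\}$ on the same variables, propagating the pins, checking consistency, and discarding the forced variables (which are automatically maximal)---coincides with the paper's Lemma~\ref{lem:bisone}. Where you genuinely diverge is in how the residual pure-implication instance is handled. The paper (Lemma~\ref{lem:bistwo}) eliminates singleton strongly connected components one at a time: each singleton $\{v\}$ is deleted and replaced by shortcut implications from its in-neighbours to its out-neighbours, and a three-way case analysis establishes a bijection on locally maximal satisfying assignments at each step; once no singleton components remain, every satisfying assignment is locally maximal, and $\numCSP{\{\Implies\}}\APred\BIS$ finishes the job. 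You instead give a one-shot global characterisation: locally maximal satisfying assignments correspond exactly to the down-sets of the condensation whose maximal elements are non-singleton SCCs, and the maximal-elements/antichain correspondence identifies these with down-sets of the sub-poset induced on the non-singleton SCCs, which is again directly an instance of $\numCSP{\{\Implies\}}$. Both routes rest on the same two observations---a variable in a non-singleton SCC is automatically maximal, while a singleton SCC assigned $0$ needs a $0$-valued out-neighbour distinct from itself, so self-loops do not help---but your version trades the paper's iterative surgery and its case-by-case bookkeeping for a single parsimonious bijection, at the price of verifying the down-set/antichain combinatorics; it also yields an explicit closed-form description of the set of locally maximal solutions (down-sets of a poset, the canonical $\BIS$-equivalent object) that the iterative argument leaves implicit.
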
   

\newcommand{\statelembishard}{Let $\Gamma$ be a constraint language with domain $\{0,1\}$. 
If $\Gamma$ has a relation that is not affine and a relation that is not  essentially monotone
then $\BIS \APred \numMaxCSP{\Gamma}$.}
\begin{lemma}
\label{lem:bishard}
\statelembishard{}
\end{lemma}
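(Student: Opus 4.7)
The plan is to prove $\BIS \APred \numMaxCSP{\Gamma}$ by composition. Since $\Gamma$ contains a non-affine relation, Theorem~\ref{thm:trichotomy} already supplies $\BIS \APred \numCSP{\Gamma}$, so it suffices to establish $\numCSP{\Gamma} \APred \numMaxCSP{\Gamma}$ under the additional hypothesis that $\Gamma$ contains a non-essentially-monotone relation~$R$. I would do this by attaching a ``lock gadget'' built from~$R$ to every variable of a $\numCSP{\Gamma}$ instance, so that every satisfying assignment of the augmented instance is automatically locally maximal, with the total count preserved up to a known polynomial-time computable factor.

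The lock gadget exploits the witness to the non-essential-monotonicity of~$R$: a tuple $t \in R$ and a position $i \in \nonzeroes{R}$ with $t_i = 0$ and $t \vee e_{i,k} \notin R$, where $k$ is the arity of~$R$. For each variable~$x$ of the $\numCSP{\Gamma}$ instance I would introduce $k-1$ fresh auxiliary variables $y_1,\ldots,y_{k-1}$ and add the constraint $R(y_1,\ldots,y_{i-1},x,y_i,\ldots,y_{k-1})$, together with further constraints that coerce each $y_j$ to the value $t_j$ in every locally maximal satisfying assignment. Given this coercion, flipping~$x$ from~$0$ to~$1$ would turn the satisfied tuple~$t$ into the forbidden tuple $t \vee e_{i,k}\notin R$, witnessing that $x$ is locked at~$0$; and since both $0$ and~$1$ are legal values for position $i \in \nonzeroes{R}$, the gadget does not cut down the number of extensions to a given assignment of the original variables.

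The main obstacle is implementing the coercion of the $y_j$ to the values $t_j$ in the absence of ready-made $\Uzero$ or $\Uone$ constraints. I would carry out a case analysis on the structure of~$R$. When $\zeroes{R} \neq \emptyset$, plugging a fresh variable into a zero-position of~$R$ already acts as a $\Uzero$ pinning, and dually for constant-$1$ positions. Failing either, the diagonal constraint $R(z,z,\ldots,z)$ restricts~$z$ to the set $\{s : (s,\ldots,s) \in R\}$, which in favourable cases is a singleton and yields a one-sided pinning. In the residual cases one should be able to chain copies of~$R$ so that local maximality of the auxiliary variables themselves propagates the required values, exactly as the self-locking pair $\Implies(x,y) \land \Implies(y,x)$ does in the paradox discussion of the introduction. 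Once each lock gadget contributes the same polynomial-time computable factor~$c$ for both values of~$x$, the augmented instance on $n$ original variables has its locally maximal satisfying count equal to $c^n$ times the original satisfying count, delivering $\numCSP{\Gamma} \APred \numMaxCSP{\Gamma}$ and, combined with Theorem~\ref{thm:trichotomy}, the lemma.
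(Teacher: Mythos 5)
Your high-level architecture coincides with the paper's: obtain $\BIS \APred \numCSP{\cdot}$ from Theorem~\ref{thm:trichotomy} and then pass to the locally maximal problem by attaching, to each variable of the CSP instance, a gadget built from a relation that is not essentially monotone. (The paper works with the single product relation $R=R_1\times R_2$ rather than with all of $\Gamma$, but that difference is cosmetic.) The genuine gap is in your lock gadget. If the auxiliary variables $y_1,\ldots,y_{k-1}$ really are coerced to the values $t_1,\ldots,t_{k-1}$ in every locally maximal satisfying assignment, then the constraint $R(y_1,\ldots,y_{i-1},x,y_i,\ldots,y_{k-1})$ evaluates, on any such assignment, to the tuple $t$ if $\sigma(x)=0$ and to $t\vee e_{i,k}$ if $\sigma(x)=1$. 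Since $t\vee e_{i,k}\notin R$, the gadget does not merely make $x$ maximal at~$0$: it \emph{forces} $\sigma(x)=0$. Your remark that ``both $0$ and $1$ are legal values for position $i\in\nonzeroes{R}$'' does not rescue this: $i\in\nonzeroes{R}$ only says that \emph{some} tuple of $R$ has a $1$ in position~$i$, not the specific tuple $t\vee e_{i,k}$, whose absence from $R$ is the very witness you chose. So the augmented instance has at most one locally maximal satisfying assignment (every original variable set to $0$), and the count is not preserved at all.

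What is actually required of the gadget is that it (i) admit the distinguished variable at value $0$ with a unique extension that is maximal at every auxiliary variable, (ii) admit it at value $1$ with a unique such extension, and (iii) in case (i) be maximal at the distinguished variable as well. This is exactly the paper's Definition~\ref{def:gadget} ($\maxzero{I}{r}=\maxone{I}{r}=1$ and $\bad{I}{r}=0$), and satisfying all three conditions simultaneously forces the auxiliary variables to take \emph{different} values depending on the value of the distinguished variable; it cannot be achieved by rigidly pinning them to a single tuple. The paper's Lemma~\ref{lem:gadget} accomplishes this via a case analysis on whether $R^*$ is $0$-valid and/or $1$-valid, typically using two $R^*$-constraints whose variable patterns are extracted from two or three carefully chosen tuples, together with the passage from $R$ to $R^*$ in Lemma~\ref{lem:star}. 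Your proposal also defers the pinning of the auxiliaries to an unresolved case analysis (``one should be able to chain copies of $R$''), which is precisely where the real work lies; as written, the proposal does not contain a correct gadget, and the claimed factor-$c^n$ correspondence fails.
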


\newcommand{\statelemsathard}{Let $\Gamma$ be a constraint language with domain $\{0,1\}$. 
If $\Gamma$ has a relation that is not affine and a relation that is not essentially monotone
and a relation that is not in $\IMtwo$
then $\SAT \APred \numMaxCSP{\Gamma}$.}
\begin{lemma}
\label{lem:sathard}
\statelemsathard{}
\end{lemma}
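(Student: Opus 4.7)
The plan is to use the relation outside $\IMtwo$ to bootstrap Lemma~\ref{lem:bishard} up from $\BIS$-hardness to $\SAT$-hardness, in three phases.

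\emph{Phase 1 (pinning).} I would first extract, from the proof of Lemma~\ref{lem:bishard} (which only uses the non-affine and non-essentially-monotone hypotheses), gadgets that simulate $\Uzero$ and $\Uone$ within $\numMaxCSP{\Gamma}$, establishing $\numMaxCSP{\Gamma \cup \{\Uzero,\Uone\}} \APred \numMaxCSP{\Gamma}$. Pinning interacts very cleanly with local maximality: a variable pinned to $0$ is automatically maximal because $\tospin{\sigma}{v}{1}$ then violates the pinning constraint, while a variable pinned to $1$ is vacuously maximal.

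\emph{Phase 2 (expressing a hard constraint).} Using pinning together with the non-$\IMtwo$ relation $R\in\Gamma$, I would express the binary $\NAND=\{(0,0),(0,1),(1,0)\}$ as a $\Gamma$-gadget on two ``input'' variables together with some pinned auxiliaries. Note that $\NAND$ is non-affine (cardinality $3$ is not a power of $2$) and non-$\IMtwo$ (since $(1,0)\vee(0,1)=(1,1)\notin\NAND$). The existence of such a gadget is a standard consequence of co-clone theory: by Post's lattice, the co-clone generated by any relation outside $\IMtwo$ together with $\Uzero,\Uone$ must contain $\NAND$ (or a cousin from which $\NAND$ is obtainable via conjunction, variable identification and projection onto pinned auxiliaries), and the plain-basis machinery of Creignou et al.~\cite{plainbases} gives a concrete expression with the same free variables.

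\emph{Phase 3 (assembly).} By Theorem~\ref{thm:trichotomy}, $\numCSP{\{\NAND,\Uzero,\Uone\}}\APeq\SAT$. Given a $\SAT$ instance, I would first reduce to $\numCSP{\{\NAND,\Uzero,\Uone\}}$, then replace each $\NAND$-constraint by the Phase~2 gadget and each pinning constraint by the Phase~1 gadget. The resulting $\numMaxCSP{\Gamma}$ instance has the property that every auxiliary variable introduced is pinned (and hence automatically locally maximal), while every original ``input'' variable has its local-maximality governed exactly by the surrounding $\NAND$-type structure, matching its behaviour in the source $\numCSP{\{\NAND,\Uzero,\Uone\}}$ instance. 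Composing the three phases yields $\SAT \APred \numMaxCSP{\Gamma}$.

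The main obstacle will be verifying correctness of the Phase~2 gadget under the locally-maximal semantics. The gadget must be \emph{parsimonious in the locally-maximal count}: each valid pair of input values (with respect to $\NAND$) must extend to exactly one locally maximal satisfying assignment of the auxiliaries. Pinned auxiliaries extend uniquely and are trivially maximal, so the risk lies in any \emph{non-pinned} auxiliaries that the gadget may need to introduce; for each such variable one must argue that it is forced to a unique value that is itself locally maximal, and that no spurious $0 \to 1$ flip is available. The natural strategy is to design the gadget with all auxiliaries pinned, but if this is not possible for every $R$, a case analysis on $R$ --- in particular on its decomposition into the ``essential'' part $R^*$ on $\nonzeroes{R}$ together with the always-zero coordinates in $\zeroes{R}$ --- will be needed to guarantee uniqueness and maximality of the extension.
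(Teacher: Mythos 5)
The central gap is in Phase 3, and it is where the whole plan breaks. Theorem~\ref{thm:trichotomy} gives you $\SAT \APeq \numCSP{\{\NAND,\Uzero,\Uone\}}$, a statement about counting \emph{all} satisfying assignments; but the instance you assemble is an instance of $\numMaxCSP{\Gamma}$, whose value is the number of \emph{locally maximal} satisfying assignments. Even with perfect $\NAND$ and pinning gadgets, you supply no mechanism converting one count into the other: a satisfying assignment of the source in which some variable is a flippable $0$ simply disappears from the target count. This is precisely where the non-essentially-monotone hypothesis has to enter, and your plan never genuinely uses it --- a decisive warning sign, because $\Gamma=\{\OR\}$ is non-affine and not contained in $\IMtwo$, yet $\numMaxCSP{\{\OR\}}$ is in $\FP$ by Lemma~\ref{lem:mon}, so any argument from those two hypotheses alone must be unsound. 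The paper's device is the \emph{maximality gadget} (Definition~\ref{def:gadget}, Lemma~\ref{lem:gadget}), built from a relation that is not essentially monotone and attached to \emph{every} variable of the source instance; it forces each satisfying assignment to extend in exactly one locally maximal way, yielding a parsimonious reduction from $\numCSP{\{R\}}$ to $\numMaxCSP{\{R\}}$. With that in hand the paper avoids your Phases 1 and 2 entirely: it sets $R$ to be the Cartesian product of the three witnessing relations $R_1\times R_2\times R_3$, applies Theorem~\ref{thm:trichotomy} directly to get $\SAT\APred\numCSP{\{R\}}$, and composes.

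The first two phases also have concrete defects. The co-clone claim in Phase 2 is false as stated: the ternary parity relation $x\oplus y\oplus z=1$ lies outside $\IMtwo$ but is affine, so the co-clone it generates together with $\Uzero,\Uone$ is affine and contains no $\NAND$; you would need to combine the non-$\IMtwo$ witness with the non-affine witness (as in Lemmas 13--15 of \cite{trichotomy}), and even then the paper explicitly warns that auxiliary variables in such implementations interact badly with local maximality. And Phase 1's pinning gadgets cannot be ``extracted'' from the proof of Lemma~\ref{lem:bishard}: that proof constructs a maximality gadget, which by design admits a locally maximal extension with the distinguished variable equal to $0$ \emph{and} one with it equal to $1$, and hence pins nothing.
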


\subsection{Proofs of the easiness lemmas.
}

We start by proving Lemmas~\ref{lem:mon}, \ref{lem:affine} and
\ref{lem:biseasy}.

{\renewcommand{\thetheorem}{\ref{lem:mon}}
\begin{lemma} \statelemmon{}\end{lemma}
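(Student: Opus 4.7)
The plan is to show that, under the essential monotonicity hypothesis, every instance $I$ has at most one locally maximal satisfying assignment $\sigma^*$, which can be found and verified in polynomial time. First I would identify the set $V_0 \subseteq V$ of variables that occur in the scope of some constraint $R(x_{i_1},\ldots,x_{i_k})$ at a position $j \in \zeroes{R}$. By definition of $\zeroes{R}$, every tuple of $R$ has a $0$ at position $j$, so in any satisfying assignment $\sigma$ we must have $\sigma(v) = 0$ for every $v \in V_0$. Moreover, flipping $\sigma(v)$ from $0$ to $1$ immediately violates that very constraint, so $\sigma$ is automatically maximal at each $v \in V_0$.

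Next I would use the monotonicity of $R^*$ to pin down the values on $V \setminus V_0$. If $v \notin V_0$, then every constraint $R(x_{i_1},\ldots,x_{i_k})$ containing $v$ must contain $v$ at some position $j \in \nonzeroes{R}$. Since $R^*$ is monotone and the zero-positions of any tuple of $R$ are forced to $0$, replacing the value $0$ at position $j$ of a tuple of $R$ by $1$ again gives a tuple of $R$. Consequently, if $\sigma$ is a satisfying assignment with $\sigma(v) = 0$, then $\tospin{\sigma}{v}{1}$ is also satisfying, contradicting maximality at $v$. Hence $\sigma(v) = 1$ for every $v \notin V_0$ in any locally maximal satisfying assignment.

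Combining these two observations, there is at most one locally maximal satisfying assignment, namely $\sigma^*$ defined by $\sigma^*(v) = 0$ iff $v \in V_0$. The algorithm scans the constraints once to compute $V_0$, constructs the candidate $\sigma^*$, and then tests whether $\sigma^*$ satisfies every constraint of $I$, outputting $1$ if it does and $0$ otherwise. Each of these steps runs in polynomial time, so $\numMaxCSP{\Gamma} \in \FP$. There is no real obstacle to the argument: essential monotonicity makes $0$-to-$1$ flips at non-zero positions completely harmless, so the condition of local maximality is so restrictive that it collapses the solution count to at most $1$.
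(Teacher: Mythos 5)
Your proof is correct and follows essentially the same route as the paper: identify the variables forced to $0$ by zero positions, argue via monotonicity of the induced relations $R^*$ that all remaining variables must be $1$ in any locally maximal satisfying assignment, and check the single resulting candidate. (You even supply slightly more detail than the paper, e.g.\ that repeated single-position flips handle a variable occurring at several non-zero positions of one constraint.)
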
}
 
 \begin{proof}
Consider a CSP instance~$I$ with a set~$V$ of variables and a set~$C$ of
constraints.
Let $U$ be the set of variables that are ``pinned'' to~$0$ by zero positions of~$R$.
Specifically,
$$U = \big\{v \in V \mid
\text{ there is a constraint
$R(v_1,\ldots,v_k)$ in $C$   such that 
$v_i=v$ and $i\in \zeroes{R}$
}\big\}.$$
Let $W = V \setminus U$.
We claim that every locally maximal satisfying assignment of~$I$
maps the variables in~$U$ to the Boolean value~$0$
and the variables in~$W$ to the Boolean value~$1$.
Thus, there is at most one locally maximal satisfying assignment, and it is
easy to check in polynomial time whether or not this
satisfying assignment exists.
Let us see why the claim is true.
It is clear from the definition of  $\zeroes{R}$ that
every satisfying assignment maps all 
variables in~$U$ to~$0$. 
But all of the induced constraints on variables in~$W$ are monotone
so if the instance~$I$ has a satisfying assignment
then the assignment that maps all variables in~$W$ to~$1$ is
the only locally maximal satisfying assignment.
 \end{proof}

{\renewcommand{\thetheorem}{\ref{lem:affine}}
\begin{lemma} \statelemaffine{}\end{lemma}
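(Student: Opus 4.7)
The plan is to use the affine structure directly, reducing to a pure linear-algebra problem over $\mathbb{F}_2$. By the plain basis result of Creignou et al.~\cite{plainbases}, if every relation in $\Gamma$ is affine and every relation in $\Gamma$ has arity at most~$k$, then every constraint in an instance $I$ is logically equivalent, on the same variables, to a conjunction of constraints drawn from $\mathbb{L}_k$. I would therefore first rewrite $I$, in polynomial time, as a single system $Ax = b$ of linear equations over $\mathbb{F}_2$ whose solution set $S \subseteq \{0,1\}^n$ (with $n = |V|$) coincides exactly with the set of satisfying assignments of $I$. Crucially, this rewrite does not introduce auxiliary variables, so the set of variables that are relevant for local maximality is unchanged.

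The key observation is that in the affine setting, whether a single $0 \to 1$ flip preserves satisfiability depends only on the variable, not on the particular satisfying assignment. Indeed, for any $\sigma \in S$ with $\sigma(v)=0$ we have $A\tospin{\sigma}{v}{1} = A\sigma + A e_v = b + A e_v$, so $\tospin{\sigma}{v}{1} \in S$ iff the column of $A$ indexed by $v$ is zero. Defining $F = \{v \in V : A e_v = 0\}$, it follows that a satisfying assignment $\sigma \in S$ is locally maximal iff $\sigma(v) = 1$ for every $v \in F$: for $v \in F$ with $\sigma(v) = 0$, the flipped assignment is still in $S$ and witnesses non-maximality; for $v \notin F$, the flipped assignment is automatically outside $S$, so $\sigma$ is maximal for $v$ regardless of $\sigma(v)$.

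It then remains to count those $\sigma \in S$ that take value $1$ on every coordinate in $F$, which I would do by Gaussian elimination on~$A$. If $S = \emptyset$ (detectable in polynomial time), the count is $0$. Otherwise $S$ is a coset of $\ker A$; since each standard basis vector $e_v$ with $v \in F$ lies in $\ker A$, the $|F|$ coordinates indexed by $F$ range independently over all of $\{0,1\}^F$ as $\sigma$ varies over $S$. Pinning each of these coordinates to $1$ therefore cuts $|S|$ by a factor of $2^{|F|}$, giving the final count $2^{n - \mathrm{rank}(A) - |F|}$. All steps run in polynomial time, placing $\numMaxCSP{\Gamma}$ in $\FP$.

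The main obstacle is really only getting the opening observation cleanly in place: one must recognise that, unlike in the general CSP setting, in the affine setting local flippability of a variable is a property of the matrix $A$ alone and not of the assignment~$\sigma$, so that the locally maximal assignments are characterised by a single additional linear-algebraic constraint, namely that $\sigma$ is identically $1$ on~$F$. Once this is noted the counting is routine Gaussian elimination, and the argument does not even need to invoke the Creignou--Hermann FP classification of $\numCSP{\Gamma}$ as a black box.
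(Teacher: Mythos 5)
Your proof is correct and follows essentially the same route as the paper's: rewrite the instance via the plain basis $\mathbb{L}_k$ into a linear system over $\mathbb{F}_2$, observe that the variables whose $0\to1$ flip preserves satisfiability are exactly those forced to~$1$ in a locally maximal assignment while every other flip automatically breaks satisfaction, and finish by Gaussian elimination. If anything, your formulation of the "free" variables as the zero columns of $A$ (rather than the paper's syntactically unconstrained variables) is slightly more robust, since it also handles constraints in which a variable occurs with even multiplicity.
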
} 
\begin{proof} 

Let $\Gamma$ be an affine constraint language with domain $\{0,1\}$.
Let $k$ be the maximum arity of any constraint in~$\Gamma$.
We know from Section~\ref{sec:boolrel}
that every constraint~$C$ over~$\Gamma$ is logically equivalent
to a conjunction of constraints over~$\mathbb{L}_k$ 
using the same variables as~$C$.
We can thus transform an instance $I$ of $\numMaxCSP{\Gamma}$ on variable set~$V$
into an equivalent instance $J$ of $\numMaxCSP{\mathbb{L}_k}$ on the same set of variables.  
Satisfying assignments of~$I$ correspond to satisfying assignments of~$J$
and locally maximal satisfying assignments of~$I$ correspond to locally maximal satisfying assignments of~$J$.

We will make one further transformation.
Let $W\subseteq V$ be the set of variables that are constrained in the  instance~$J$
(i.e., that occur in some constraint 
in~$J$). Let 
$U = V \setminus W$ be the set of unconstrained variables.
Let $J'$ be the instance on variable set~$W$
obtained from $J$ by removing the variables in~$U$.

Now, in any locally maximal satisfying assignment of~$J$, the variables 
$U$, being unconstrained, must take the value~1.  
Thus, there is a one-to-one correspondence between locally maximal satisfying assignments of~$J$
and locally maximal satisfying assignments of~$J'$.
 
Finally, 
observe that every satisfying assignment of $J'$ is locally maximal:
flipping any variable~$v$ from $0$ to~$1$ will violate all the constraints 
that involve $v$.  So the instance $J'$ has the same number of satisfying assignments
as locally maximal satisfying assignments.
Thus, we can count the locally maximal satisfying assignments
of~$I$ by counting the locally maximal satisfying assignments of~$J$ which
is the same as counting the locally maximal satisfying assignments of~$J'$
which is the same as counting all of the satisfying assignments of~$J'$.
This final step can be done by  Gaussian elimination.  
\end{proof}

Lemma~\ref{lem:biseasy}
follows directly from Lemma~\ref{lem:bisone} and~\ref{lem:bistwo}, which we prove next.

\begin{lemma}
\label{lem:bisone}
Let $\Gamma$ be a constraint language with domain $\{0,1\}$. 
If every relation in~$\Gamma$ is in~$\IMtwo$  
then $\numMaxCSP{\Gamma} \APred 
\numMaxCSP{\{\Implies\}}$.\end{lemma}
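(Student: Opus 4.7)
The plan is to exhibit a polynomial-time parsimonious reduction from $\numMaxCSP{\Gamma}$ to $\numMaxCSP{\{\Implies\}}$, which will in particular be an AP-reduction. The reduction proceeds in two stages: first I rewrite the input using the plain basis $\{\Implies,\Uzero,\Uone\}$, and then I absorb all pinning constraints by propagation, leaving only $\Implies$ constraints on the surviving free variables.

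Given an instance $I$ of $\numMaxCSP{\Gamma}$ on variable set $V$, Step~1 replaces each constraint $R(v_{i_1},\ldots,v_{i_k})$ of $I$ by the conjunction of $\{\Implies,\Uzero,\Uone\}$-constraints on the \emph{same} variables $v_{i_1},\ldots,v_{i_k}$ that is logically equivalent to~$R$; such a conjunction exists because $\{\Implies,\Uzero,\Uone\}$ is a plain basis for $\IMtwo$. Call the resulting instance $I_1$. Since logical equivalence guarantees that $I$ and $I_1$ have the \emph{same} set of satisfying assignments as functions $V\to\{0,1\}$, and since local maximality of a satisfying assignment depends only on which single-coordinate-flipped neighbours are themselves satisfying, $I$ and $I_1$ have precisely the same locally maximal satisfying assignments.

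Step~2 then eliminates the unary constraints by propagation. Initialise $V_1\leftarrow\{v:\Uone(v)\in I_1\}$ and $V_0\leftarrow\{v:\Uzero(v)\in I_1\}$, and close them under the rules ``$\Implies(v,w)$ with $v\in V_1$ forces $w\in V_1$'' and ``$\Implies(v,w)$ with $w\in V_0$ forces $v\in V_0$''. If propagation produces a conflict, i.e.\ $V_0\cap V_1\neq\emptyset$, output $0$. Otherwise let $F=V\setminus(V_0\cup V_1)$ and let $I_2$ be the instance of $\numMaxCSP{\{\Implies\}}$ on $F$ whose constraints are precisely the $\Implies$-constraints of $I_1$ having both endpoints in~$F$. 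The satisfying assignments of $I_1$ correspond bijectively to those of $I_2$: the former are obtained from the latter by pinning $V_0$ to~$0$ and $V_1$ to~$1$, and every constraint of $I_1$ not retained in $I_2$ is automatically satisfied after this pinning.

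The crux, and the step requiring care, is verifying that this bijection also preserves local maximality. For $v\in V_1$ the variable is set to~$1$ and is vacuously maximal. For $v\in V_0$, the variable participates either in a direct $\Uzero(v)$ constraint or in some $\Implies(v,w)$ with $w\in V_0$ that triggered its propagation; either way $\tospin{\sigma}{v}{1}$ violates that very constraint. For $v\in F$ with $\sigma(v)=0$, a short case analysis over the $I_1$-constraints touching $v$ will show that cross constraints between $v$ and a vertex of $V_0\cup V_1$ are either trivially satisfied after flipping $v$ or would have forced $v$ itself to be pinned (contradicting $v\in F$), so the only constraints of $I_1$ that can be broken by the flip are those of the form $\Implies(v,w)$ with $w\in F$ and $\sigma(w)=0$; this is exactly the condition for $v$ to be maximal in $I_2$. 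The main obstacle is therefore not the existence of the reduction but this meticulous case analysis; once it is discharged, the transformation runs in polynomial time and is parsimonious, so the lemma follows.
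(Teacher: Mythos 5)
Your proposal is correct and follows essentially the same route as the paper: rewrite via the plain basis $\{\Implies,\Uzero,\Uone\}$ on the same variables, propagate the pinnings along implications to determine the forced sets $V_0$ and $V_1$, reject on conflict, and show a parsimonious correspondence with the induced $\Implies$-instance on the free variables. Your case analysis for why local maximality is preserved (in particular, that a cross constraint between a free variable and a pinned one is either automatically satisfied after the flip or would have forced the variable to be pinned) is sound and in fact spells out a verification the paper leaves implicit.
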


\begin{proof}
As we have noted in Section~\ref{sec:boolrel}, 
the  set $\IMtwo$ has the plain basis 
$B=\{\Implies,\Uzero,\Uone\}$. 
Thus, we can thus transform an instance   of $\numMaxCSP{\Gamma}$  
into an equivalent instance  of $\numMaxCSP{ B}$ on the same set of variables.  
So to finish, we just need to give an
AP-reduction from 
$\numMaxCSP{B}$ to $\numMaxCSP{\{\Implies\}}$.

Let $I$ be an instance of $\numMaxCSP{B}$ on variable set~$V$.
It is convenient to model the structure of~$I$  
as a directed graph $G(I)$ with vertex set~$V$: 
There is a  
directed edge from~$u$ to $v$ in~$G(I)$ whenever there
is a constraint $\Implies(u,v)$ in~$I$. 
Define subsets $V_1(I)$, $V_0(I)$ and $U(I)$ 
of~$V$ as follows.  Note that these sets can be computed from~$I$ in polynomial time.
\begin{itemize}
\item $v\in V_1(I)$ if, for some $u\in V$, there is a constraint $\Uone(u)$ in~$I$
and there is a directed path from~$u$ to~$v$ in~$G(I)$. Note that 
we include the empty directed path 
(of length~$0$)
so 
for every constraint $\Uone(u)$ in~$I$, the vertex~$u$ is in~$V_1(I)$.
\item $v\in V_0(I)$ if, for some $w \in V$, there is a constraint $\Uzero(w)$ in~$I$
and there is a directed path from~$v$ to~$w$ in~$G(I)$. Once again, we include length-$0$ paths.
\item $U(I) = V \setminus (V_0(I) \cup V_1(I))$.
\end{itemize}
 
The instance~$I$ is satisfiable if and only if $V_0(I) \cap V_1(I)$ is non-empty.
(To see this, note that if $V_0(I) \cap V_1(I)$ is non-empty, there is a variable~$u$
that is forced to take value~$0$ and value~$1$ in any satisfying assignment, which is impossible.
On the other hand, if the intersection is empty, then the assignment that
maps $V_0(I)$ to value~$0$ and the remaining vertices to value~$1$ is satisfying.)

If $I$ is unsatisfiable, then our AP-reduction just returns the number of satisfying assignments of~$I$
(which is zero) without using the oracle for $\numMaxCSP{\{\Implies\}}$.

Suppose instead that $I$ is satisfiable.
Let $I'$ be the instance of $\numMaxCSP{\{\Implies\}}$ on variable set~$U$
that is induced from~$I$.
That is, for any vertices $u_1$ and $u_2$ in~$U$,
$\Implies(u_1,u_2)$ is a constraint in~$I'$ if and only if it is a constraint in~$I$.
Then
the locally maximal satisfying assignments of~$I$ are in one-to-one correspondence with the locally maximal satisfying assignments of~$I'$.
\end{proof}
 
\begin{lemma}
\label{lem:bistwo}
$\numMaxCSP{\{\Implies\}} \APred \BIS$.
\end{lemma}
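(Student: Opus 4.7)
The plan is to view an instance $I$ of $\numMaxCSP{\{\Implies\}}$ on variable set $V$ as a directed graph $G=(V,E)$, with an edge $u\to v$ whenever $I$ contains the constraint $\Implies(u,v)$. A satisfying assignment corresponds to the set $T\subseteq V$ of $0$-valued variables, which must be closed under in-neighbours in $G$. Local maximality at $v\in T$ amounts to the existence of a \emph{proper} out-neighbour of $v$ (i.e., one different from $v$ itself) that also lies in $T$; self-loops never contribute, since $\Implies(v,v)$ remains satisfied after flipping $v$ to $1$.

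I then compute the strongly connected components of $G$ and pass to the condensation DAG $D$. Since variables in a common SCC must receive equal values, $T$ is determined by a subset $\mathcal{T}$ of SCCs, and the in-neighbour closure property translates to $\mathcal{T}$ being a downset of the partial order $(D,\preceq)$, where $C\preceq C'$ iff there is a directed path from $C$ to $C'$ in $D$. I call an SCC \emph{nice} if it contains at least two vertices, and \emph{trivial} otherwise. The key claim is that $\sigma$ is locally maximal if and only if every $\preceq$-maximal element of $\mathcal{T}$ is nice. For $v$ in a nice SCC $C\in\mathcal{T}$, the induced subgraph $G[C]$ is strongly connected on $\ge 2$ vertices, so $v$ has a proper out-neighbour inside $C$, which lies in $T$; local maximality at $v$ is automatic. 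For a trivial singleton $C=\{v\}\in\mathcal{T}$, all proper out-neighbours of $v$ lie in strictly larger SCCs, so local maximality requires some $C'\succ C$ to be in $\mathcal{T}$; because $\mathcal{T}$ is a downset this is equivalent to an immediate $D$-successor of $C$ lying in $\mathcal{T}$, and hence to $C$ being non-maximal in $\mathcal{T}$.

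This characterisation yields a bijection between locally maximal satisfying assignments of $I$ and downsets of $D$ whose $\preceq$-maximal elements are all nice. Via $A\mapsto{\downarrow}A$, such downsets are in bijection with antichains of nice SCCs in $D$, which are exactly the antichains of the induced sub-poset $D_N$ on the nice SCCs (comparability is inherited). Using the standard antichain/downset bijection on $D_N$, the count equals the number of downsets of $D_N$, which is $\numCSP{\{\Implies\}}(I')$ for the polynomial-time constructible instance $I'$ whose variables are the nice SCCs and whose constraints encode the edge relation of $D_N$. Thus $\numMaxCSP{\{\Implies\}}(I)=\numCSP{\{\Implies\}}(I')$, and since $\{\Implies\}\subseteq\IMtwo$ and $\Implies$ is not affine, Theorem~\ref{thm:trichotomy} gives $\numCSP{\{\Implies\}}\APred\BIS$. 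Composing these produces the desired AP-reduction.

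The main delicate point I expect is the SCC-level characterisation of local maximality: in particular, checking that singletons carrying a self-loop still count as \emph{trivial} (so they are forbidden from being $\preceq$-maximal in $\mathcal{T}$) and verifying, via the downset property, the equivalence between ``some $\preceq$-successor of $C$ lies in $\mathcal{T}$'' and ``some immediate $D$-successor of $C$ lies in $\mathcal{T}$''. The remaining bijective steps and the appeal to Theorem~\ref{thm:trichotomy} are routine.
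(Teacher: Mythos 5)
Your argument is correct, and it takes a genuinely different route from the paper's. The paper proceeds by iterated local surgery: as long as the constraint digraph has a singleton strongly connected component $\{v\}$, it deletes $v$ and adds shortcut constraints $\Implies(u,w)$ for every in-neighbour $u$ and out-neighbour $w$ of $v$, proving by a three-way case analysis that each such step preserves the number of locally maximal satisfying assignments; once no singleton components remain, every satisfying assignment is locally maximal and the count is passed to $\numCSP{\{\Implies\}}\APred\BIS$. You instead give a single global characterisation via the condensation DAG: satisfying assignments are downsets of the SCC-poset, local maximality is exactly the condition that every maximal element of the downset is a non-singleton (``nice'') SCC, and the standard downset/antichain correspondence converts the count into the number of downsets of the induced sub-poset on nice SCCs, i.e.\ into $\numCSP{\{\Implies\}}(I')$. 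Your SCC-level analysis is sound, including the treatment of self-loops and the reduction from ``some $\preceq$-successor in $\mathcal{T}$'' to ``some immediate successor in $\mathcal{T}$'' via the downset property. One point worth making explicit when you build $I'$: the constraints must encode the comparability relation of the \emph{induced sub-poset} on nice SCCs (i.e.\ reachability in $D$ restricted to nice pairs), not merely the edges of $D$ joining two nice SCCs, since two nice SCCs may be comparable only via a chain of trivial ones; your definition of $D_N$ as the induced sub-poset handles this, but the phrase ``edge relation'' could be misread. Comparatively, your route yields an exact closed-form correspondence in one step and avoids the paper's iterated bijection with its casework, at the cost of invoking a little poset combinatorics; the paper's route is more elementary and purely graph-local. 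Both terminate with the same appeal to Theorem~\ref{thm:trichotomy}.
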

  
\begin{proof}
 
Let $I$ be an instance of $\numMaxCSP{\{\Implies\}}$ on variable set~$V$.
As in the proof of Lemma~\ref{lem:bisone},
It is convenient to model the structure of~$I$  
as a directed graph $G(I)$ with vertex set~$V$: 
There is a  
directed edge from~$u$ to $v$ in~$G(I)$ whenever there
is a constraint $\Implies(u,v)$ in~$I$. 
In the following, we refer to
the strongly-connected components of $G(I)$ as ``components''.

First, suppose that $|V|>1$ and that $G(I)$ has a singleton component $\{v\}$.
We will show below how to construct (in polynomial-time) an instance~$I'$ 
of $\numMaxCSP{\{\Implies\}}$ on  variable set  $V-\{v\}$
such that the number of locally maximal satisfying assignments of~$I$ is
equal to the number of locally maximal satisfying assignments of~$I'$.
 
Before giving the details of the construction,
we show how to use it to obtain the desired AP-reduction.
Given $I$, we repeat the construction as many times as necessary to obtain 
an instance $I^*$ of $\numMaxCSP{\{\Implies\}}$
such that 
$I^*$ has the same number of locally maximal satisfying assignments as~$I$
and either (1) $I^*$ has only one variable, or (2) $G(I^*)$ has no
singleton components. In Case~(1), the number of locally maximal satisfying assignments of~$I^*$
(and hence of~$I$) is one.  So consider Case~(2).
 Now note that
  every satisfying assignment of~$I^*$ 
is locally maximal since 
flipping the value of a single variable without flipping the rest of the variables
in its component does not preserve satisfiability. 
Thus, the number of locally maximal satisfying assignments of~$I$
is equal to the number of satisfying assignments of~$I^*$.
To finish, we use an oracle for $\BIS$ to approximately count
the satisfying assignments of~$I^*$.
This is possible   since  $\numCSP{\{\Implies\}}$ is AP-reducible 
  to $\BIS$
  by Theorem~\ref{thm:trichotomy} (which is from \cite{trichotomy}). 

To finish the proof, we give the construction.
So  suppose that $|V|>1$ and that $G(I)$ has a singleton component $\{v\}$.
Let $P$ be the (potentially empty) set of in-neighbours of vertex~$v$ in~$G(I)$
and let $S$ be the (potentially empty) set of out-neighbours of~$v$.
Construct $I'$ from~$I$ by deleting variable~$v$ and all constraints involving~$v$
and adding all constraints $\Implies(u,w)$ for $u\in P$ and $w\in S$.
To finish, we will give a bijection between the locally maximal satisfying assignments of~$I$ and~$I'$.

We start by partitioning the locally maximal satisfying assignments of~$I$
and $I'$ into three sets.
\begin{itemize}
\item Let $\Sigma_{1,*}$ be the set of locally maximal satisfying assignments $\sigma$ of~$I$ for which there
exists $u\in P$ with $\sigma(u)=1$. 
Let $\Sigma'_{1,*}$ be the set of locally maximal satisfying assignments $\sigma'$ of~$I'$ for which there
exists $u\in P$ with $\sigma'(u)=1$. 
We make the following deductions about every $\sigma \in \Sigma_{1,*}$ and $\sigma'\in \Sigma'_{1,*}$.
\begin{enumerate}[({A}1)]
\item $\sigma(v)=1$. (This follows since $\sigma$ is satisfying.)
\item    $\forall w\in S$, $\sigma(w)= \sigma'(w)=1$.
(This follows since $\sigma$ and $\sigma'$ are satisfying.)
\item  
For every $u\in P$ with $\sigma'(u)=0$,
there is an out-neighbour $z'$ of~$u$ in $G(I')$ which has $\sigma'(z')=0$.
For every $u\in P$ with $\sigma(u)=0$,
there is an out-neighbour $z$ of~$u$ in $G(I)$ which is not equal to~$v$ and has $\sigma(z)=0$.

These follow since $\sigma'$ and $\sigma$ are maximal for~$u$ and $\sigma(v)=1$.)
\end{enumerate}
\item Let $\Sigma_{*,0}$ be the set of locally maximal satisfying assignments $\sigma$ of~$I$ for which there
exists $w\in S$ with $\sigma(w)=0$.
Let $\Sigma'_{*,0}$ be the set of locally maximal satisfying assignments $\sigma'$ of~$I'$ for which there
exists $w\in S$ with $\sigma'(w)=0$.
We make the following deductions about every $\sigma \in \Sigma_{*,0}$
and $\sigma' \in \Sigma'_{*,0}$.
\begin{enumerate}[({B}1)]
\item $\sigma(v)=0$. (This follows since $\sigma$ is satisfying.)
\item   
$\forall u\in P$, $\sigma(u)= \sigma'(u)=0$. (This follows since $\sigma$ and $\sigma'$ are satisfying.)
 
\end{enumerate}
\item Let $\Sigma_*$ be the set of all other locally maximal satisfying assignments of~$I$.
Let $\Sigma'_*$ be the set of all other locally maximal satisfying assignments of~$I$.
We make the following deductions about every $\sigma \in \Sigma_{*}$ and $\sigma' \in \Sigma'_{*}$.
\begin{enumerate}[({C}1)]
\item For all $u\in P$, $\sigma(u)= \sigma'(u)=0$. (This follows from the definitions of $\Sigma_{1,*}$ and $\Sigma'_{1,*}$.)
\item For all $w\in S$, $\sigma(w)= \sigma'(w)=1$. (This follows from the definitions of $\Sigma_{*,0}$ and $\Sigma'_{*,0}$.)
\item $\sigma(v)=1$. (This follows because $\sigma$ is maximal for~$v$.)
 \item For every $u\in P$,
there is an out-neighbour $z'$ of~$u$ in $G(I')$ which  has $\sigma'(z')=0$. 
 For every $u\in P$,
there is an out-neighbour $z\neq v$ of~$u$ in $G(I)$ which  has $\sigma(z)=0$.
(This follows because $\sigma'$ and $\sigma$ are maximal for~$u$ and $\sigma(v)=1$.)
 \end{enumerate}
\end{itemize}

Given $\sigma\in \Sigma_{1,*}$, let $\sigma''$ be the induced assignment of~$I'$.
Since by (A2) $\sigma(w)=1$ for every $w\in S$,
all of the new constraints $(u,w)$ in $I'$ are satisfied, 
so $\sigma''$ is satisfying.
By construction, $\sigma''$ is maximal for the vertices outside of~$P$ (since vertices outside of~$P$
have the same out-neighbours in~$G(I)$ and~$G(I')$).
We now check that it is maximal for vertices $u\in P$. This follows from~(A3).
Thus, we have given an injection from $\Sigma_{1,*}$ into $\Sigma'_{1,*}$.
We now show that the reverse direction is an injection from $\Sigma'_{1,*}$ to $\Sigma_{1,*}$.
Consider $\sigma'\in \Sigma_{1,*}'$ and let $\sigma''$ be the assignment of~$I$
formed from~$\sigma'$ by taking $\sigma''(v)=1$ (satisfying (A1) for $\sigma=\sigma''$).
Since by (A2)
$\sigma'(w)=1$ for all $w\in S$, we conclude that $\sigma''$ 
satisfies all of the constraints involving~$v$ in~$I$ so $\sigma''$
is satisfying.
Once again, we must check that $\sigma''$ is maximal for vertices $u\in P$, and this follows
from (A3).

In a similar way, we will establish a bijection from $\Sigma_{*,0}$ to $\Sigma'_{*,0}$.
Given $\sigma \in \Sigma_{*,0}$, let $\sigma''$ be the induced assignment of~$I'$.
(B2)  allows us to conclude that $\sigma''$ is satisfying.
The edges in~$G(I')$ from~$P$ to~$w$ allow us to conclude that $\sigma''$ is maximal for all $u\in P$
so it is locally maximal. 
Going the other direction,
consider $\sigma'\in \Sigma_{*,0}$ and let $\sigma''$ be the assignment of~$I$
formed from~$\sigma'$ by taking $\sigma''(v)=0$ (satisfying (B1) for $\sigma=\sigma''$).
Since by (B2) $\sigma'(u)=0$ for all $u\in P$, $\sigma''$ is satisfying.
Since $\sigma''(u) =0$, $\sigma''$ is maximal for every vertex in~$P$.
Since $\sigma''(w)=0$, $\sigma''$ is maximal for~$v$. Thus, it is locally maximal.

In exactly the same way, we establish a bijection from $\Sigma_*$ to $\Sigma'_*$.
We conclude that  $I'$ has the same number of locally maximal satisfying assignments as~$I$,
 so we have completed the proof.
\end{proof}

\subsection{Maximality Gadgets}

We will now examine gadgets that are useful for proving  Lemmas~\ref{lem:numphard}, \ref{lem:bishard}
and~\ref{lem:sathard}.
We start with some useful definitions.

\begin{definition}\label{def:gadget}
Let $I$ be a CSP instance with 
a set~$V$ of  variables and a distinguished variable~$r\in V$.
We use $\maxzero{I}{r}$ to denote the number of locally maximal satisfying assignments~$\sigma$ of~$I$
with $\sigma(r)=0$.
We use $\maxone{I}{r}$ to denote the number of locally maximal satisfying assignments~$\sigma$ of~$I$
with $\sigma(r)=1$.
Finally, we use $\bad{I}{r}$ to denote the number of satisfying assignments~$\sigma$ of~$I$
such that $\sigma$ is maximal for every variable in~$V\setminus \{r\}$ but 
$\sigma$ is not maximal for~$r$.
A \emph{maximality gadget} for a relation~$R$ is a CSP instance~$I$ with constraint language~$\{R\}$
and distinguished variable~$r$
such that $\maxzero{I}{r} = \maxone{I}{r}=1$ and $\bad{I}{r}=0$.
\end{definition}

Definition~\ref{def:gadget} could be weakened since we do not really need
$\maxzero{I}{r}$ and $ \maxone{I}{r}$ to be~$1$, we only need them to be equal and non-zero.
However, since our constructions satisfy the stronger definition, we simplify the remainder
of the paper by using the stronger definition that we have stated.
 
Recall that $R^*$ is the relation induced from $R$ on 
the (non-zero) positions in $\nonzeroes{R}$. We first relate maximality 
gadgets for~$R$ and~$R^*$.

\begin{lemma} \label{lem:star}
Let $R$ be a Boolean relation.
If there is a maximality gadget for $R^*$ then there is a maximality 
gadget for~$R$.
\end{lemma}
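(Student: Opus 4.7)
The plan is to \emph{inflate} a maximality gadget for $R^{*}$ into one for~$R$ by padding each $R^{*}$-constraint with fresh auxiliary variables placed at the zero positions of~$R$.

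More concretely, suppose $I^{*}$ is a maximality gadget for $R^{*}$ with variable set $V^{*}$ and distinguished variable $r\in V^{*}$. Write $k$ for the arity of $R$ and let $m=|\nonzeroes{R}|$, so $R^{*}$ has arity~$m$. I would build $I$ as follows: keep all variables of $V^{*}$ (including $r$), and for each constraint $R^{*}(v_{1},\ldots,v_{m})$ in $I^{*}$ introduce a fresh tuple of auxiliary variables $(a_{j})_{j\in\zeroes{R}}$ (distinct per constraint) and replace that constraint by the $R$-constraint obtained by assigning $v_{1},\ldots,v_{m}$ to the positions in $\nonzeroes{R}$ (in order) and the auxiliary variables $a_{j}$ to the positions in $\zeroes{R}$. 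Let $V_{\mathrm{aux}}$ denote the set of all auxiliary variables, and set $V=V^{*}\cup V_{\mathrm{aux}}$. The distinguished variable of $I$ is still~$r$.

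Next I would establish the key bijection. Because each auxiliary variable $a$ occurs in exactly one constraint, at a position of $\zeroes{R}$, every satisfying assignment $\sigma$ of~$I$ must set $\sigma(a)=0$, and moreover flipping $a$ to~$1$ immediately violates its unique constraint, so $\sigma$ is automatically maximal at every auxiliary variable. Restriction to $V^{*}$ therefore gives a bijection $\sigma\mapsto\sigma|_{V^{*}}$ between satisfying assignments of $I$ and satisfying assignments of $I^{*}$. To transfer maximality, observe that for any $v\in V^{*}$ with $\sigma(v)=0$, flipping $v$ to $1$ inside $I$ affects only the $R$-constraints containing~$v$; since the auxiliary variables in those constraints are fixed to~$0$ and appear at the zero positions of~$R$, satisfaction of such an $R$-constraint after the flip is equivalent to satisfaction of the corresponding $R^{*}$-constraint in $I^{*}$ after the same flip. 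Hence $\sigma$ is maximal at $v$ in $I$ if and only if $\sigma|_{V^{*}}$ is maximal at $v$ in $I^{*}$.

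Combining these facts with $r\in V^{*}$, the bijection restricts to bijections between the sets counted by $\maxzero{I}{r}$ and $\maxzero{I^{*}}{r}$, by $\maxone{I}{r}$ and $\maxone{I^{*}}{r}$, and by $\bad{I}{r}$ and $\bad{I^{*}}{r}$. Using $\maxzero{I^{*}}{r}=\maxone{I^{*}}{r}=1$ and $\bad{I^{*}}{r}=0$, we conclude that $I$ is a maximality gadget for~$R$. The construction is clearly polynomial, though since the lemma asserts only existence, efficiency is not required. I do not expect any genuine obstacle here: the content is entirely the verification that the padding variables are inert with respect to local maximality, which follows from their being forced to $0$ by a single constraint at a zero position.
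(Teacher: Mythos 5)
Your proof is correct and takes essentially the same approach as the paper: pad the zero positions of each constraint with auxiliary variables that are forced to~$0$ (hence inert for satisfiability and automatically maximal), and transfer the gadget properties through the resulting maximality-preserving bijection. The only difference is cosmetic --- the paper uses a single shared auxiliary variable $w$ in all constraints, while you use fresh auxiliaries per constraint --- and your verification that the padding variables do not disturb maximality is in fact spelled out more explicitly than in the paper.
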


\begin{proof}
Suppose $R$ has arity~$h$ and $R^*$ has arity $k\leq h$.  
Without loss of generality, suppose that that 
$\nonzeroes{R}=\{1,\ldots,k\}$ and $\zeroes{R}=\{k+1,\ldots,h\}$;
thus $R^*$ is the restriction of $R$ to the first $k$~places.
Suppose we have a   maximality gadget   for~$R^*$.  
That is, we have a CSP instance $I^*$, with $R^*$-constraints, 
on variables $V$, with a distinguished variable $r\in V$, 
satisfying Definition~\ref{def:gadget}. 
We show how to construct a maximality gadget  for~$R$.
 
If $k=h$ then $R^*=R$ and there is nothing to show, so suppose $k<h$.
Construct a new CSP instance~$I$ with $R$-constraints as follows.  
The variable set of $I$ is $V\cup\{w\}$, where $w$ is a 
new variable that is not in~$V$.
Replace each constraint $R^*(v_1,\ldots,v_k)$ in $I^*$ by a constraint 
$R(v_1,\ldots,v_k,w,\ldots,w)$ in~$I$.
(Note that there are $h-k$ occurrences of the variable~$w$.)
Since $k<h$, the variable~$w$ is forced to~$0$ in any satisfying assignment 
of~$I$.  Thus, there is a bijection between satisfying assignments of~$I^*$ 
and satisfying assignments of~$I$, obtained by setting $w$ to~$0$.  
The bijection preserves locally maximality.
From this bijection it is clear that $\maxzero{I}{r} = \maxzero{I^*}{r} = 1$,
$\maxone{I}{r}=\maxone{I^*}{r}=1$ and $\bad{I}{r}=\bad{I^*}{r}=0$.
Thus, $I$ is a maximality gadget for~$R$.
\end{proof}

Lemma~\ref{lem:gadget} below shows
 how maximality gadgets can be constructed.
Some of the constructions are  a little bit reminiscent of the ``strict, perfect, faithful implementations'' of Creignou,
Khanna, and Sudan~\cite{CKS} (see Lemmas 5.24 and 5.25, Claim 5.31 and Lemma 5.30).
In
Lemmas~13--15 of~\cite{trichotomy}, we use these implementations to
realise one of the Boolean functions~$\Implies$, $\NAND=\{(0,0),(0,1),(1,0)\}$ 
or $\OR=\{(0,1),(1,0),(1,1)\}$ using a relation~$R$.  However, there are two important 
differences.
(1) The implementations of~\cite{CKS} and~\cite{trichotomy} allow
the use of auxiliary variables, and the use of these must be carefully controlled in order to
satisfy the maximality constraints in Definition~\ref{def:gadget}.
(2) While implementations of~$\Implies$ and~$\NAND$ are useful for maximality gadgets,
implementations of~$\OR$ do not seem to be useful. Thus, 
the   implementations from earlier papers do not suffice for building maximality gadgets.

 \begin{lemma}
 \label{lem:gadget}
Let $R$ be a Boolean relation that is  not essentially monotone.
Then there is a maximality gadget for~$R$.
\end{lemma}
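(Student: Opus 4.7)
The plan is to first reduce, via Lemma~\ref{lem:star}, to the case where $R = R^*$, so $R$ has no zero positions and is itself not monotone. From non-monotonicity I extract a witness: a tuple $s \in R$ and an index $i$ with $s_i = 0$ but $s \vee e_{i,k} \notin R$. After relabelling coordinates I may assume $i = 1$, and I set $A = \{j : s_j = 1\}$ and $B = \{j \neq 1 : s_j = 0\}$.

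I would then split into two cases depending on whether $R$ possesses a componentwise-maximal tuple with a zero coordinate. In the easier case there is some $m \in R$ that is maximal under componentwise~$\leq$ with $m_\ell = 0$ for some~$\ell$; then $m \vee e_{j,k} \notin R$ for every $j$ with $m_j = 0$. A single constraint $R(v_1, \ldots, v_k)$ with $v_\ell = r$ and the remaining $v_j$ fresh immediately has a locally maximal satisfying assignment with $r = 0$ given by $v_j = m_j$. To obtain the full maximality-gadget properties I would add a second copy of $R$ on fresh variables (or identify some of the $v_j$'s), chosen so that $m$ becomes the unique locally maximal extension with $r = 0$ and the all-ones tuple (which lies in $R$ by the no-zero-position hypothesis and sits above $m$ along some monotone path in~$R$) becomes the unique one with $r = 1$. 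Verifying $\bad{I}{r} = 0$ here reduces to the maximality of~$m$.

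The harder case is when every componentwise-maximal tuple of $R$ equals $(1, \ldots, 1)$, so that non-monotonicity is \emph{transient} as exhibited by $\Implies$ itself. Here no single $R$-constraint suffices, and I would imitate the three-cycle $\Implies$-gadget $\Implies(r, u_1) \wedge \Implies(u_1, u_2) \wedge \Implies(u_2, r)$, whose only satisfying assignments are all-zero and all-one, both locally maximal. Concretely, I would chain several copies of $R$ by identifying coordinate~$1$ of one copy with a suitable coordinate of the next, cycling back through~$r$, with the identification pattern dictated by the witness~$s$. Then in any satisfying assignment with $r = 0$ the shared variables propagate the $s$-pattern around the cycle and collapse to a unique locally maximal $\sigma_0$, while with $r = 1$ only the all-ones configuration survives, giving the unique $\sigma_1$.

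The main obstacle I expect lies in this second case: individual $R$-constraints do not pin variables to fixed values, so the cycle must exploit $s \vee e_{1,k} \notin R$ to rule out \emph{mixed} configurations in which some variables along the cycle take value~$0$ and others take value~$1$. A short auxiliary argument, likely by induction on $|A|$ (using the no-zero-position hypothesis to introduce helper copies of $R$ that pin individual $A$-coordinates to~$1$), should be required to kill these intermediate configurations and force $\bad{I}{r} = 0$. Assembling the counts $\maxzero{I}{r} = \maxone{I}{r} = 1$ and $\bad{I}{r} = 0$ in each case then completes the proof.
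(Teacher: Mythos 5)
Your reduction to $R=R^*$ and the extraction of a non-monotonicity witness match the paper's opening moves, but the case split you then impose does not hold together. Observe that $R$ has a componentwise-maximal tuple with a zero coordinate if and only if $R$ is \emph{not} $1$-valid (if the all-ones tuple is in $R$ it strictly dominates everything else, so it is the only maximal tuple). Hence in your ``easier case'' the all-ones tuple is \emph{not} in $R$, and your justification for it being there --- ``by the no-zero-position hypothesis'' --- is a non sequitur: having no zero positions only means each coordinate is $1$ in \emph{some} tuple (e.g.\ $\{(0,1),(1,0)\}$), not that the all-ones tuple is present. So the assignment you designate as the unique locally maximal one with $r=1$ is not even satisfying, and the construction collapses. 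Independently, even for the $r=0$ side you never pin down uniqueness: a single $R$-constraint on fresh variables admits one locally maximal extension for \emph{each} maximal tuple of $R$ agreeing with $r$ at position $\ell$, and ``add a second copy or identify some variables, chosen so that\dots'' is not a construction.

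Your ``harder case'' ($R$ $1$-valid) is where the real work lies, and you acknowledge yourself that the obstacle --- ruling out mixed configurations around the cycle and forcing $\bad{I}{r}=0$ --- is unresolved; the proposed ``induction on $|A|$'' is not carried out. The paper avoids cycles entirely. It splits on $0$-validity and $1$-validity of $R^*$, which determines whether auxiliary variables can be pinned by constraints of the form $R^*(w,\ldots,w)$ or $R^*(y,\ldots,y)$, and in each of its four cases it builds a gadget from at most three $R^*$-constraints on at most four variables, choosing the variable pattern from carefully selected tuples (a non-tuple $s\notin R^*$ when $R^*$ is both $0$- and $1$-valid; a tuple $s$ with the maximum number of ones together with a second tuple $s'$ witnessing a $1$ where $s$ has a $0$, and the combinations $s\vee s'$ and $s\wedge\neg s'$, in the remaining cases). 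As it stands, your proposal establishes the lemma in no case, so the gap is essential rather than a matter of detail.
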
 
\begin{proof} 
  
 Since $R$ is not essentially monotone, we know that $R^*$ is not monotone.   
 Suppose $R^*$ has arity $k$; clearly $k\geq1$ (in fact it is not difficult to 
 see that $k\geq2$). A simple but useful observation
 about $R^*$ is the following.
 \begin{equation}\label{eq:simpleobs}
 \text{For all $i$, $1\leq i\leq k$, there exists a tuple $(v_1,\ldots,v_k)\in R^*$ with $v_i=1$}.
 \end{equation}
 This follows from the fact that $i\notin\zeroes{R}$.
 
 We consider several cases, depending on~$R$.
 In each case, we construct a maximality gadget for~$R^*$, which by Lemma~\ref{lem:star} gives
 a maximality gadget for~$R$.

 \begin{enumerate}[{\bf {Case} 1}]
 
 \item \label{Case1} {\bf  $R^*$ is $0$-valid and $1$-valid.}
 Since $R^*$ is not monotone, it cannot be the complete relation.
 Let $s=(s_1,\ldots,s_k)$ be a tuple not in $R^*$.  Our maximality 
 gadget~$I$ has two variables $r$ and $x$, and two constraints $R^*(a_1,\ldots,a_k)$
 and $R^*(b_1,\ldots,b_k)$.  
 Each $a_j$ or~$b_j$ stands for an occurrence 
 of $r$ or~$x$; specifically,
 \begin{itemize}
 \item If $s_j=0$ then $a_j=r$ and $b_j=x$.
 \item If $s_j=1$ then $a_j=x$ and $b_j=r$.
 \end{itemize}
The following table lists the four possible assignments of~$(r,x)$
and analyses whether they are satisfying.
$$
\begin{array}{|c|cc| l |}
\hline
\sigma& \sigma(r) &  \sigma(x) &  \text{$I(r,x)$ satisfied?}    \\
\hline 
\sigma_1&0 & 0 & \mbox{yes, since $R^*$ is $0$-valid}\\
\sigma_2&0 & 1 & \mbox{no, since $\sigma(a_1,\ldots,a_k)=s\notin R^*$ }\\
\sigma_3&1 & 0 & \mbox{no, since $\sigma(b_1,\ldots,b_k)=s\notin R^*$ }\\
\sigma_4&1 & 1 & \mbox{yes, since $R^*$ is $1$-valid}\\
\hline
\end{array}
$$
Since $\sigma_1$ and $\sigma_4$ are the unique satisfying assignments $\sigma$
with $\sigma_1(r)=0$ and $\sigma_4(r)=1$, respectively, and both are locally maximal, we have
$\maxzero{I}{r}=\maxone{I}{r}=1$.  As there are no other satisfying assignments,
$\bad{I}{r}=0$.  So the conditions for $I$ to be 
a maximality gadget are satisfied.
   
 \item \label{Case2} {\bf   $R^*$ is $0$-valid but not $1$-valid.} 
 Let $m$ be the maximum number of  ones in any tuple in~$R^*$.
 By observation~(\ref{eq:simpleobs}), $m\geq1$, and since  
 $R^*$ is not $1$-valid, $m<k$. 
 Let $s=(s_1,\ldots,s_k)$ be a tuple in~$R^*$ with $m$ ones.
 Again by observation~(\ref{eq:simpleobs}), 
 there is a tuple $s'=(s'_1,\ldots,s'_k) \in R^*$
 such that, for some~$i$, 
 $s_i=0$ and $s'_i=1$.
 Note that the tuple $s''=s\vee s'$  
 is not in~$R^*$, since it has more than $m$ ones. 
 We split the analysis into  two sub-cases.
 
 \begin{enumerate} 
 \item {\bf The tuple $t=s\wedge\neg s'$ is in $R^*$.}
 (The tuple $t$ has $t_j=1$ precisely when $s_j=1$ and $s'_j=0$.)
 The gadget $I$ will have  variables $r$, $x$ and $w$ and constraints
 $R^*(w,\ldots,w)$ and
 $R^*(a_1,\ldots,a_k)$ where $a_j$ is defined 
 as follows. 
 
 \begin{itemize}
 \item If $s'_j=1$ then $a_j = x$.
 \item If $s_j'=0$ and $s_j=0$ then $a_j=w$.
 \item If $s_j'=0$ and $s_j=1$ then $a_j=r$.
 \end{itemize}
  
 Since $R^*$ is $0$-valid but not $1$-valid, the constraint $R^*(w,\ldots,w)$ 
 ensures that every satisfying assignment $\sigma$ of~$I$ has $\sigma(w)=0$.
 So we consider the four potential satisfying assignments.
 $$
\begin{array}{|c|  ccc| l |}
\hline
\sigma & \sigma(r) & \sigma(x) & \sigma(w) & \text{$I(r,x,w)$ satisfied?} \\
\hline 
\sigma_1&0 & 0 & 0 & \mbox{yes, since $R^*$ is $0$-valid}\\
\sigma_2&0 & 1 & 0 & \mbox{yes, since $\sigma(a_1,\ldots,a_k)=s'\in R^*$ }\\
\sigma_3&1 & 0 & 0 & \mbox{yes, since $\sigma(a_1,\ldots,a_k)=t \in R^*$ }\\
\sigma_4&1 & 1 & 0 & \mbox{no, since $\sigma(a_1,\ldots,a_k)=s''\notin R^*$}\\
\hline
\end{array}
$$
The assignment $\sigma_1$ is not maximal for~$x$
but $\sigma_2$ and $\sigma_3$, are locally maximal, so
$\maxzero{I}{r}=\maxone{I}{r}=1$ and $\bad{I}{r}=0$.  

\item {\bf The tuple $t=s\wedge\neg s'$ is not in $R^*$.}
The gadget $I$ will have  variables $r$, $x$ and $w$ and the constraints
$R^*(w,\ldots,w)$, $R^*(a_1,\ldots,a_k)$ and $R^*(b_1,\ldots,b_k)$,
 where $a_j$ and $b_j$ are defined as follows.
 \begin{itemize}
 \item If $s_j=0$ then $a_j = b_j=w$.
 \item If $s_j=1$ and $s_j'=0$ then $a_j=x$ and $b_j=r$.
 \item If $s_j=1$ and $s_j'=1$ then $a_j=r$ and $b_j=x$.
 \end{itemize}
 Since $R^*$ is $0$-valid but not $1$-valid,
 every satisfying assignment $\sigma$ has $\sigma(w)=0$.
 So we consider the four possible satisfying assignments.
 $$
\begin{array}{|c|  ccc| l |}
\hline
\sigma &  \sigma(r) & \sigma(x) & \sigma(w) & \text{$I(r,x,w)$ satisfied?} \\
\hline 
\sigma_1 & 0 & 0 & 0 & \mbox{yes, since $R^*$ is $0$-valid}\\
\sigma_2 & 0 & 1 & 0 & \mbox{no, since $\sigma(a_1,\ldots,a_k)=t\notin R^*$}\\
\sigma_3 & 1 & 0 & 0 & \mbox{no, since $\sigma(b_1,\ldots,b_k)=t\notin R^*$}\\
\sigma_4 & 1 & 1 & 0 & \mbox{yes, since $\sigma(a_1,\ldots,a_k)=\sigma(b_1,\ldots,b_k)=s\in R^*$}\\
\hline
\end{array}
$$
Both satisfying assignments are locally maximal so 
$\maxzero{I}{r}=\maxone{I}{r}=1$ and $\bad{I}{r}=0$.  
\end{enumerate}

 \item \label{Case3} {\bf  $R^*$ is not $0$-valid but is $1$-valid.}    
 As $R^*$ is not monotone, we may choose a tuple $s=(s_1,\ldots,s_k)$ in~$R^*$, 
 and an index $i\in [k]$
 such that $s_i=0$ and $s'=s\vee e_{i,k}$ is not in~$R^*$.
 The gadget~$I$ will have  variables $r$, $x$ and $y$ and the  
 constraints   
 $R^*(y,\ldots,y)$, $R^*(a_1,\ldots,a_k)$ and $R^*(b_1,\ldots,b_k)$ where 
 $a_i=x$ and $b_i=r$ and for $j\neq i$, $a_j$ and $b_j$ are defined as follows.
 \begin{itemize}
 \item If $s_j'=0$ then $a_j=r$ and $b_j=x$.
 \item If $s_j' =1$ then $a_j=b_j=y$.
 \end{itemize}
 Since $R^*$ is $1$-valid but not $0$-valid, the 
 constraint $R^*(y,\ldots,y)$ ensures that every 
 satisfying assignment~$\sigma$ of~$I$
 has $\sigma(y)=1$. 
 We consider the potential satisfying assignments with $\sigma(y)=1$.
 $$
\begin{array}{|c|  ccc| l |}
\hline
 \sigma & \sigma(r) & \sigma(x) & \sigma(y) &  \text{$I(r,x,y)$ satisfied?}  \\
\hline 
\sigma_1 & 0 & 0 & 1 & \mbox{yes, since $\sigma(a_1,\ldots,a_k)=\sigma(b_1,\ldots,b_k)=s\in R^*$}\\
\sigma_2 & 0 & 1 & 1 & \mbox{no, since $\sigma(a_1,\ldots,a_k)=s'\notin R^*$}\\
\sigma_3 & 1 & 0 & 1 & \mbox{no, since $\sigma(b_1,\ldots,b_k)=s'\notin R^*$}\\
\sigma_4 & 1 & 1 & 1 & \mbox{yes, since $R^*$ is $1$-valid}\\
\hline
\end{array}
$$ 
Note that $\sigma_1$ and $\sigma_4$ are locally maximal, so
$\maxzero{I}{r}=\maxone{I}{r}=1$ and $\bad{I}{r}=0$.

\item \label{Case4} {\bf   $R^*$ is not $0$-valid and not $1$-valid.}
We split the analysis into two sub-cases.

\begin{enumerate}
\item {\bf  There is a tuple $s\in R^*$ such that $\neg s\in R^*$.}
The gadget~$I$ will have  variables $r$ and~$x$ and the single 
constraint $R^*(a_1,\ldots,a_k)$ where 
\begin{itemize}
 \item If $s_j=0$ then $a_j=r$.
 \item If $s_j=1$ then $a_j=x$.
\end{itemize}
The potential satisfying assignments are 
 $$
\begin{array}{|c|  cc| l |}
\hline
\sigma & \sigma(r) & \sigma(x) & \text{$I(r,x)$ satisfied?}  \\
\hline 
\sigma_1 & 0 & 0 & \mbox{no , since $R^*$ is not $0$-valid}\\
\sigma_2 & 0 & 1 & \mbox{yes, since $\sigma(a_1,\ldots,a_k)=s\in R^*$  }\\
\sigma_3 & 1 & 0 & \mbox{yes since $\sigma(a_1,\ldots,a_k)=\neg s\in R^*$  }\\
\sigma_4 & 1 & 1 & \mbox{no, since $R^*$ is not $1$-valid}\\
\hline
\end{array}
$$
Note that $\sigma_2$ and $\sigma_3$ are locally maximal, so
$\maxzero{I}{r}=\maxone{I}{r}=1$ and $\bad{I}{r}=0$.

\item{\bf There is no tuple $s$ with $s\in R^*$ and $\neg s\in R^*$.}
Let $m$ be the maximum number of ones in any tuple in~$R^*$.
By observation~(\ref{eq:simpleobs}) $m\geq1$, and since $R^*$ is not $1$-valid, $m<k$.
Let $s=(s_1,\ldots,s_k)$ be a tuple in~$R^*$ with $m$ ones.
Again by observation~(\ref{eq:simpleobs}) there is a tuple $s'=(s'_1,\ldots,s'_k) \in R^*$
such that, for some $i$,
$s'_i=1$   and $s_i=0$.
Note that the tuple $s\vee s'$ is not in~$R^*$, since it has more than $m$~ones. 
The gadget $I$ will have four  variables $r$, $x$, $y$ and $w$ and the constraints
$R^*(a_1,\ldots,a_k)$ and $R^*(b_1,\ldots,b_k)$,
where $a_j$ and $b_j$ are defined as follows.
 \begin{itemize}
 \item If $s_j=s'_j=0$ then $a_j=b_j=w$.
 \item If $s_j=0$ and $s'_j=1$ then $a_j=w$ and $b_j=x$.
 \item If $s_j=1$ and $s'_j=0$ then $a_j=y$ and $b_j=r$.
 \item If $s_j=s'_j=1$ then $a_j=b_j=y$.
 \end{itemize} 
Since $R^*$ is not $0$-valid or $1$-valid, and
$s\in R^*$ but $\neg s\notin R^*$,
the constraint  $R^*(a_1,\ldots,a_k)$
ensures that every satisfying assignment~$\sigma$ has
$\sigma(w)=0$ and $\sigma(y)=1$.
So we consider the four possible satisfying assignments.
$$
\begin{array}{|c|  cccc| l |}
\hline
\sigma & \sigma(r) & \sigma(x) & \sigma(y) & \sigma(w) &  \text{$I(r,x,y,w)$ satisfied?}  \\
\hline 
\sigma_1 & 0 & 0 & 1 & 0 & \mbox{maybe}\\
\sigma_2 & 0 & 1 & 1 & 0 & \mbox{yes, since $\sigma(b_1,\ldots,b_k)=s'\in R^*$ }\\
\sigma_3 & 1 & 0 & 1 & 0 & \mbox{yes, since $\sigma(b_1,\ldots,b_k)=s\in R^*$ }\\
\sigma_4 & 1 & 1 & 1 & 0 & \mbox{no, since $\sigma(b_1,\ldots,b_k)=s\vee s'\notin R^*$ }\\
\hline
\end{array}
$$
There is no need to determine whether $\sigma_1$ is satisfying.
If it is, then it is not maximal for~$x$. The satisfying assignments
$\sigma_2$ and $\sigma_3$ are locally maximal.
So 
$\maxzero{I}{r}=\maxone{I}{r}=1$ and $\bad{I}{r}=0$.    
 \end{enumerate}
\end{enumerate}
\end{proof}
 
\subsection{Proofs of the hardness lemmas}
 
 We now prove Lemmas~\ref{lem:numphard}, \ref{lem:bishard}
and~\ref{lem:sathard}.
 
  {\renewcommand{\thetheorem}{\ref{lem:numphard}}
\begin{lemma} \statelemnumphard{}\end{lemma}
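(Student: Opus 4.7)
The plan is to reduce $\numCSP{\{R'\}}$ to $\numMaxCSP{\Gamma}$ by a polynomial-time Turing reduction, where $R' \in \Gamma$ is any non-affine relation. Since Theorem~\ref{thm:CH} guarantees that $\numCSP{\{R'\}}$ is $\numP$-complete, this immediately yields $\numP$-hardness of $\numMaxCSP{\Gamma}$.

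Fix a relation $R \in \Gamma$ that is not essentially monotone, and let $H$ be the maximality gadget for $R$ provided by Lemma~\ref{lem:gadget}, with distinguished variable $r$. Given an instance $I$ of $\numCSP{\{R'\}}$ on variable set $V$, construct an instance $I^*$ over $\Gamma$ as follows: keep all of the $R'$-constraints of $I$, and for each $v \in V$ adjoin a fresh copy $H_v$ of $H$ (with disjoint auxiliary variables) in which the distinguished variable is identified with $v$. Since $|H|$ depends only on $R$ and not on $I$, the construction runs in polynomial time, and $I^*$ is a valid instance of $\numMaxCSP{\Gamma}$ because both $R$ and $R'$ lie in $\Gamma$.

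The core claim is that $I^*$ has exactly as many locally maximal satisfying assignments as $I$ has satisfying assignments, so a single oracle call to $\numMaxCSP{\Gamma}$ on $I^*$ completes the reduction. For the forward direction, given a satisfying assignment $\sigma$ of $I$, extend $\sigma$ within each $H_v$ using the unique locally maximal satisfying extension compatible with $\sigma(v)$; this exists and is unique because $\maxzero{H}{r} = \maxone{H}{r} = 1$. The resulting $\sigma^*$ satisfies all $R'$-constraints (they lie entirely in $V$) and all gadget constraints by construction. Local maximality at each auxiliary gadget variable is inherited from the gadget, as such variables appear in no other constraint of $I^*$; maximality at $v$ is immediate when $\sigma(v)=1$, and when $\sigma(v)=0$ it follows because flipping $v$ already violates a constraint inside $H_v$, the chosen extension being locally maximal within $H_v$.

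For the reverse direction, let $\sigma^*$ be any locally maximal satisfying assignment of $I^*$, and let $\sigma$ be its restriction to $V$. Then $\sigma$ satisfies $I$, and within each $H_v$ the restriction of $\sigma^*$ is a satisfying assignment of $H_v$ that is maximal at every auxiliary variable. The delicate step — which is the main obstacle and the place where the $\bad$ parameter of the gadget is used — is to show that the restriction is also maximal for $v$ inside $H_v$ alone, rather than being rescued by an $R'$-constraint sitting outside the gadget. If it were not, then $\sigma^*$ restricted to $H_v$ would be a satisfying assignment of $H_v$ maximal at every variable except $r$, contradicting $\bad{H}{r}=0$. Once this is in hand, $\maxzero{H}{r} = \maxone{H}{r} = 1$ forces $\sigma$ to determine $\sigma^*$ uniquely, completing the bijection and the reduction.
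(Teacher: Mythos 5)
Your proof is correct and follows essentially the same route as the paper: attach a copy of the maximality gadget from Lemma~\ref{lem:gadget} to every variable, and use $\maxzero{H}{r}=\maxone{H}{r}=1$ together with $\bad{H}{r}=0$ to get a parsimonious correspondence between satisfying assignments and locally maximal satisfying assignments. The only (immaterial) difference is that the paper first takes the Cartesian product of the non-affine relation and the non-essentially-monotone relation so that a single relation $R$ carries both properties and the whole reduction lives in $\numMaxCSP{\{R\}}$, whereas you use the two relations side by side, which is equally valid since both belong to $\Gamma$.
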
}
\begin{proof}

Suppose that $R_1$ is an arity $k_1$ relation in $\Gamma$ that is not affine and
$R_2$ is an arity-$k_2$ relation in  $\Gamma$ that is not essentially monotone.
Let $k=k_1+k_2$
and let $R$ be the Cartesian product of~$R_1$ and~$R_2$.
Specifically, 
$$R = \{(x_1,\ldots,x_k) \mid (x_1,\ldots,x_{k_1})\in R_1, (x_{k_1+1},\ldots,x_k)\in R_2\}.$$
Since $R_1$ is not affine, neither is~$R$.
Thus, Theorem~\ref{thm:CH} (due to Creignou and Hermann)
shows that $\numCSP{\{R\}}$ is $\numP$-complete.
 
Since $R_2$ is not essentially monotone, $R$ is not essentially monotone.
Thus, we can use Lemma~\ref{lem:gadget} 
to obtain a maximality gadget $I$ for~$R$ with 
variable set~$V$ and
some distinguished variable~$r$.
We will  
next use the maximality gadget to give a 
polynomial-time Turing reduction from
$\numCSP{\{R\}}$ to  $\numMaxCSP{\{R\}}$.

First, consider the assignments of~$I$. The definition of maximality gadget ensures the following.
\begin{enumerate}
\item Since   $\bad{I}{r}=0$,
every satisfying assignment $\sigma$ of~$I$ that is maximal for every variable in $V\setminus\{r\}$ is also maximal for~$r$.
\item \label{OKone}
Since $\maxzero{I}{r}=1$ there is exactly one
satisfying assigment $\sigma_0$ of~$I$ that is maximal for every variable in $V\setminus\{r\}$ and satisfies $\sigma_0(r)=0$.
Note that $\sigma_0$ is maximal for~$r$.
\item \label{OKtwo} Since $\maxzero{I}{r}=1$ there is exactly one 
satisfying assignment $\sigma_1$ of~$I$ that is maximal for every variable in $V\setminus\{r\}$ and satisfies $\sigma_1(r)=1$.
Note that $\sigma_1$ is maximal for~$r$.
\end{enumerate} 

Now consider an instance~$J$ of $\numCSP{\{R\}}$ with vertex set $U$.
We will construct an instance~$J'$ of $\numMaxCSP{\{R\}}$
with $|V| \times |U|$ variables.
For every variable $u\in U$,
let $V_u$ be a set of $|V|$ variables consisting of
variable $u$ and $|V|-1$ new variables.
Let $I_u$ be a copy of the
maximality gadget~$I$ using the variables~$V_u$
with distinguished variable~$u$.
Finally, let $J'$ be the instance of $\numMaxCSP{\{R\}}$
with variable set $\bigcup_{u\in U} V_u$
and with all of the constraints in each of the instances~$I_u$
and with all of the further $R$-constraints inherited from~$J$ (these constraints inherited from~$J$
constrain the vertices in~$U$).

We will next show that the  
satisfying assignments of~$J$ are in one-to-one correspondence 
with locally maximal satisfying assignments of~$J'$.  
By construction, any locally maximal satisfying assignment of~$J'$
induces a satisfying assignment of~$J$ (just look at the induced assignment on variables in~$U$).

Consider any satisfying assignment $\sigma$ of~$J$.
Consider any variable $u\in U$.
If $\sigma(u)=0$ then by item~\eqref{OKone} above, there
is exactly one way to extend $\sigma$ to the vertices in $V_u$ that is maximal
for all variables in $V_u \setminus \{u\}$. This extension is also maximal for~$u$ itself.
(Thus, even if the constraints in~$J$ would allow the assignment at~$u$ to be flipped to a~$1$,
the unique extension of the assignment to~$V_u$ does not allow this.)
If $\sigma(u)=1$ then by item~\eqref{OKtwo},there
is exactly one way to extend $\sigma$ to the vertices in $V_u$ that is maximal
for all variables in $V_u \setminus \{u\}$. This extension is also maximal for~$u$ itself.
Thus, $\sigma$ can be extended in exactly one way to a locally maximal satisfying assignment of~$J'$.

So we have shown that  the  
satisfying assignments of~$J$ are in one-to-one correspondence 
with locally maximal satisfying assignments of~$J'$.  
Since $\numCSP{\{R\}}$ 
is $\numP$-hard, we have proved that $\numMaxCSP{\{R\}}$ is $\numP$-hard.

Finally, there is a trivial polynomial-time Turing reduction from
  $\numMaxCSP{\{R\}}$ to 
  $\numMaxCSP{\Gamma}$
since every instance of $\numMaxCSP{\{R\}}$ can be written as an instance of
$\numMaxCSP{\Gamma}$.
\end{proof}

   {\renewcommand{\thetheorem}{\ref{lem:bishard}}
\begin{lemma} \statelembishard{}\end{lemma}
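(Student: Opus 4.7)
The plan is to establish $\BIS \APred \numMaxCSP{\Gamma}$ by composing the known hardness of $\numCSP{\{R\}}$, for a suitably chosen~$R$, with a parsimonious reduction from $\numCSP{\{R\}}$ to $\numMaxCSP{\{R\}}$ built out of a maximality gadget.

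First, I would construct $R$ exactly as in the proof of Lemma~\ref{lem:numphard}: let $R_1 \in \Gamma$ be a relation that is not affine and $R_2 \in \Gamma$ a relation that is not essentially monotone, and let $R$ be their Cartesian product. Then $R$ inherits non-affineness from $R_1$ and non-essential-monotonicity from~$R_2$.

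Second, I would apply Theorem~\ref{thm:trichotomy} to the singleton language $\{R\}$. Because $R$ is not affine, either $R\in\IMtwo$, in which case $\numCSP{\{R\}}\APeq\BIS$, or $R\notin\IMtwo$, in which case $\numCSP{\{R\}}\APeq\SAT$. Since $\BIS\APred\SAT$ (using that $\SAT$ is complete under AP-reductions within $\numP$, as noted just before the statement of the lemmas), in both cases $\BIS\APred\numCSP{\{R\}}$.

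Third, I would reuse the gadget construction from the proof of Lemma~\ref{lem:numphard}. Because $R$ is not essentially monotone, Lemma~\ref{lem:gadget} furnishes a maximality gadget for~$R$. Attaching a fresh copy of this gadget at every variable of a $\numCSP{\{R\}}$-instance~$J$ produces a $\numMaxCSP{\{R\}}$-instance~$J'$ whose locally maximal satisfying assignments are in bijection with the satisfying assignments of~$J$ (the gadget's properties $\maxzero{I}{r}=\maxone{I}{r}=1$ and $\bad{I}{r}=0$ ensure that each satisfying assignment of~$J$ extends in exactly one way, and every extension is locally maximal). This is a parsimonious polynomial-time reduction, hence an AP-reduction, so $\numCSP{\{R\}}\APred\numMaxCSP{\{R\}}$. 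Combining this with the trivial $\numMaxCSP{\{R\}}\APred\numMaxCSP{\Gamma}$, the chain
\[
\BIS \APred \numCSP{\{R\}} \APred \numMaxCSP{\{R\}} \APred \numMaxCSP{\Gamma}
\]
gives the desired conclusion. The only conceptual step is the use of Theorem~\ref{thm:trichotomy} to produce $\BIS\APred\numCSP{\{R\}}$; the gadget reduction itself carries over verbatim from Lemma~\ref{lem:numphard} and requires no new analysis, so the main obstacle is purely bookkeeping: verifying that the reduction of Lemma~\ref{lem:numphard} is parsimonious (which it visibly is) rather than merely Turing.
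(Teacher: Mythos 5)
Your proposal is correct and is essentially identical to the paper's proof: the same Cartesian product $R$ of a non-affine and a non-essentially-monotone relation, the same appeal to Theorem~\ref{thm:trichotomy} (plus $\BIS\APred\SAT$) to get $\BIS\APred\numCSP{\{R\}}$, and the same observation that the gadget-based Turing reduction of Lemma~\ref{lem:numphard} is parsimonious and hence an AP-reduction. The only cosmetic difference is that you spell out the two cases ($R\in\IMtwo$ versus $R\notin\IMtwo$) of the trichotomy explicitly, which the paper leaves implicit.
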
}
\begin{proof}

Even though we require an AP-reduction, the proof is essentially the same as the proof
of Lemma~\ref{lem:numphard}.
Suppose that $R_1$ is an arity $k_1$ relation in $\Gamma$ that is not affine and
$R_2$ is an arity-$k_2$ relation in  $\Gamma$ that is not essentially monotone.
Let $k=k_1+k_2$
and let $R$ be the Cartesian product of~$R_1$ and~$R_2$
as in the proof of Lemma~\ref{lem:numphard}.
Since $R$ is not affine, Theorem~\ref{thm:trichotomy}
(due to Dyer, Goldberg and Jerrum),
together with $\BIS \APred \SAT$, which follows from the fact that every problem in $\numP$
is AP-reducible to $\SAT$ \cite[Section 3]{DGGJ},
shows $\BIS \APred \numCSP{\{R\}}$.

The polynomial-time Turing reduction from $\numCSP{\{R\}}$ to  $\numMaxCSP{\{R\}}$
given in the proof of Lemma~\ref{lem:numphard} is actually an AP-reduction since
the satisfying assignments of~$J$ are in one-to-one correspondence with locally maximal satisfying assignments of~$J'$. So we have established $\BIS \APred \numMaxCSP{\{R\}}$.

Finally, $\numMaxCSP{\{R\}} \APred \numMaxCSP{\Gamma}$
since every instance of $\numMaxCSP{\{R\}}$ can be written as an instance of
$\numMaxCSP{\Gamma}$.
\end{proof}

{\renewcommand{\thetheorem}{\ref{lem:sathard}}
\begin{lemma} \statelemsathard{}\end{lemma}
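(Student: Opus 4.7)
My plan is to follow the template of Lemmas~\ref{lem:numphard} and~\ref{lem:bishard}. Pick relations $R_1, R_2, R_3 \in \Gamma$ (not necessarily distinct) witnessing the three hypotheses: $R_1$ is not affine, $R_2$ is not essentially monotone, and $R_3$ is not in $\IMtwo$. Since $\Gamma$ contains a non-affine relation ($R_1$) and a relation outside $\IMtwo$ ($R_3$), Theorem~\ref{thm:trichotomy} applied to $\Gamma$ immediately yields $\numCSP{\Gamma} \APeq \SAT$, and in particular $\SAT \APred \numCSP{\Gamma}$. So it suffices to produce an AP-reduction from $\numCSP{\Gamma}$ to $\numMaxCSP{\Gamma}$.

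For that reduction, I would invoke Lemma~\ref{lem:gadget} with the relation $R_2$ (which is not essentially monotone) to obtain a maximality gadget $I$ on variable set $V$ with distinguished variable $r$. Crucially, $I$ uses only $R_2$-constraints, and hence only $\Gamma$-constraints. Given an instance $J$ of $\numCSP{\Gamma}$ with variable set $U$, I would construct $J'$ exactly as in the proof of Lemma~\ref{lem:numphard}: for each $u \in U$ attach a fresh, otherwise-disjoint copy $I_u$ of $I$ with $u$ playing the role of $r$, and retain all the original constraints of $J$ (which are $\Gamma$-constraints on vertices in $U$). The conditions $\maxzero{I}{r} = \maxone{I}{r} = 1$ and $\bad{I}{r} = 0$ built into the definition of a maximality gadget ensure that every satisfying assignment of $J$ extends in exactly one way to a locally maximal satisfying assignment of $J'$ (the unique maximal extension within each $I_u$ being automatically maximal at $u$ itself, regardless of which values the $J$-constraints would in principle allow), and conversely every locally maximal satisfying assignment of $J'$ restricts on $U$ to a satisfying assignment of $J$. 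This is a parsimonious, hence AP, reduction, completing $\SAT \APred \numCSP{\Gamma} \APred \numMaxCSP{\Gamma}$.

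I do not anticipate any real obstacle: once Theorem~\ref{thm:trichotomy} and Lemma~\ref{lem:gadget} are in hand, the argument is essentially verbatim that of Lemma~\ref{lem:numphard}, with arbitrary $\Gamma$-constraints of $J$ playing the role previously played by $R$-constraints. The gadget-attachment reduction does not care what form the original constraints of $J$ take, only that the attached gadget uses constraints available in $\Gamma$. If one preferred to match the exact Cartesian-product style of the earlier two lemmas, one could instead set $R = R_1 \times R_2 \times R_3$ and work with $\numMaxCSP{\{R\}}$; the only additional step would be to verify that `not in $\IMtwo$' is preserved by Cartesian product with non-empty relations, which is straightforward by substituting a fixed tuple of $R_1 \times R_2$ into a hypothetical plain-basis expression for $R$ and reading off a plain-basis expression for $R_3$ on its own variables, contradicting $R_3 \notin \IMtwo$.
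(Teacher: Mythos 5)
Your proposal is correct and takes essentially the same approach as the paper: both combine Theorem~\ref{thm:trichotomy} (to get $\SAT \APred \numCSP{\cdot}$ from non-affineness and non-membership in $\IMtwo$) with the maximality gadget of Lemma~\ref{lem:gadget} and the parsimonious gadget-attachment reduction from the proof of Lemma~\ref{lem:numphard}. The only difference is cosmetic: the paper routes everything through the single Cartesian-product relation $R = R_1 \times R_2 \times R_3$ (implicitly using the closure fact you verify in your alternative route), whereas your primary route works directly with $\Gamma$ and thereby sidesteps that verification; both versions are sound.
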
}
\begin{proof}Suppose that $R_1$ is an arity $k_1$ relation in $\Gamma$ that is not affine,
$R_2$ is an arity-$k_2$ relation in  $\Gamma$ that is not essentially monotone,
and $R_3$ is an arity $k_3$ relation in $\Gamma$ that is not in $\IMtwo$.
Let $k=k_1+k_2+k_3$
and let $R$ be the Cartesian product of~$R_1$ and~$R_2$ and~$R_3$.
Since $R_1$ is not affine and $R_3$ is not in $\IMtwo$,
$R$ is not affine and is not in $\IMtwo$. 
Thus, Theorem~\ref{thm:trichotomy}
 shows $\SAT \APred \numCSP{\{R\}}$.
Since $R_2$ is not essentially monotone, $R$ is not essentially monotone.
Thus, following the proof of Lemmas~\ref{lem:numphard} and~\ref{lem:bishard} we can use Lemma~\ref{lem:gadget} 
to obtain an AP-reduction from
$\numCSP{\{R\}}$ to $\numMaxCSP{\{R\}}$
and we can write every instance of $\numMaxCSP{\{R\}}$  as an instance of
$\numMaxCSP{\Gamma}$ to obtain 
an AP-reduction from $\numMaxCSP{\{R\}}$ to
$\numMaxCSP{\Gamma}$. 
\end{proof}

\bibliographystyle{plain}
\bibliography{\jobname}
 
\end{document}